\documentclass{article}

\usepackage{amsmath,amssymb,pifont}
\usepackage{multicol}
\usepackage{amstext}
\usepackage{amsthm}
\usepackage{multirow}
\usepackage{booktabs}
\usepackage[skip=0pt]{subcaption}
\usepackage{times}
\usepackage{lipsum}
\usepackage[shortlabels]{enumitem}
\usepackage{cancel}
\usepackage{wrapfig}
\usepackage{array}
\usepackage{siunitx}
\usepackage{csvsimple}
\usepackage[multidot]{grffile}
\usepackage{bbm}
\usepackage{dblfloatfix}
\usepackage{hyperref}
\usepackage{makecell}
\usepackage{bbm, dsfont}
\usepackage{mathtools}
\usepackage[dvipsnames]{xcolor}
\usepackage{comment}
\usepackage[capitalise]{cleveref}
\usepackage{float}
\usepackage{caption}

\usepackage{geometry}
\usepackage{mathpazo}
\usepackage{enumitem}

\usepackage[multiple]{footmisc}
\usepackage{mathrsfs}
\usepackage{todonotes}
\usepackage{tikz}
\usepackage{hyperref}
\usepackage{cleveref}
\usepackage{algpseudocode,algorithm,algorithmicx}
\usepackage{xfrac}

\usepackage{graphicx} 
\usepackage{color}

\usepackage{array}
\usepackage{amssymb}
\usepackage{amsmath}
\usepackage{xspace}
\usepackage{fancyhdr}
\usepackage{comment}

\newcommand{\agnoteinline}[1]{\todo[color=purple!30,inline]{AG: #1}}

\hypersetup{final}

\newcommand{\boldzero}{\ensuremath{\boldsymbol{0}}}

\newcommand{\bfA}{\ensuremath{\mathbf{A}}}

\newcommand{\bfC}{\ensuremath{\mathbf{C}}}

\newcommand{\bfG}{\ensuremath{\mathbf{G}}}

\newcommand{\bfI}{\ensuremath{\mathbf{I}}}

\newcommand{\bfZ}{\ensuremath{\mathbf{Z}}}

\newcommand{\bfa}{\ensuremath{\mathbf{a}}}
\newcommand{\bfb}{\ensuremath{\mathbf{b}}}
\newcommand{\bfc}{\ensuremath{\mathbf{c}}}

\newcommand{\bfg}{\ensuremath{\mathbf{g}}}

\newcommand{\bfw}{\ensuremath{\mathbf{w}}}
\newcommand{\bfx}{\ensuremath{\mathbf{x}}}
\newcommand{\bfy}{\ensuremath{\mathbf{y}}}
\newcommand{\bfz}{\ensuremath{\mathbf{z}}}

\newcommand{\calA}{\ensuremath{\mathcal{A}}}

\newcommand{\calD}{\ensuremath{\mathcal{D}}}

\newcommand{\calM}{\ensuremath{\mathcal{M}}}
\newcommand{\calN}{\ensuremath{\mathcal{N}}}
\newcommand{\calO}{\ensuremath{\mathcal{O}}}

\newcommand{\calV}{\ensuremath{\mathcal{V}}}

\renewcommand{\Pr}{\mathop{\mathbf{Pr}}}

\newcommand{\E}{\mathop{\mathbf{E}}}

\newtheorem{lemma}{Lemma}[section]
\newtheorem{theorem}[lemma]{Theorem}

\newtheorem{corollary}[lemma]{Corollary}

\newtheorem{definition}[lemma]{Definition}
\newtheorem{conjecture}[lemma]{Conjecture}

\makeatletter
\newcommand{\vast}{\bBigg@{4}}
\newcommand{\Vast}{\bBigg@{5}}
\makeatother

\newcommand{\ex}[2]{{\ifx&#1& \mathbb{E} \else
\underset{#1}{\mathbb{E}} \fi \left[#2\right]}}
\newcommand{\pr}[2]{{\ifx&#1& \mathbb{P} \else
\underset{#1}{\mathbb{P}} \fi \left[#2\right]}}

\newcommand{\Var}[1]{\ensuremath{\mathbf{Var}\left(#1\right)}}

\newcommand{\ltwo}[1]{\left\|#1\right\|_2}

\usepackage{ulem}

\DeclarePairedDelimiterX{\infdivx}[2]{(}{)}{%
  #1\;\delimsize\|\;#2%
}

\newcommand{\mypar}[1]{\smallskip
	\noindent{\textbf{{#1}:}}}
	
\renewcommand{\epsilon}{\varepsilon}

\renewcommand{\tilde}{\widetilde}

\newcommand{\node}{\texttt{node}}

\newcommand{\clip}[2]{{\sf clip}\left(#1,#2\right)}

\setlist{nolistsep}
\setlist[itemize]{noitemsep, topsep=0pt}

\setlist{nolistsep}
\setlist[itemize]{noitemsep, topsep=0pt}

\renewcommand{\E}{\mathbb{E}}

\newcommand{\Bern}{\text{Bern}}
\newcommand{\Unif}{\text{Uniform}}
\newcommand{\blue}[1]{\color{blue}#1}
\usepackage{fullpage}
\usepackage[numbers, sort, comma, square]{natbib}

\begin{document}
\title{Privacy Amplification for BandMF via $b$-Min-Sep Subsampling}
\author{
Andy Dong\thanks{\texttt{dxa@stanford.edu}, Stanford University}
\and
Arun Ganesh\thanks{\texttt{arunganesh@google.com}, Google Research}
}
\maketitle


\begin{abstract}
We study privacy amplification for BandMF, i.e., DP-SGD with correlated noise across iterations via a banded correlation matrix. We propose $b$-min-sep subsampling, a new subsampling scheme that generalizes Poisson and balls-in-bins subsampling, extends prior practical batching strategies for BandMF, and enables stronger privacy amplification than cyclic Poisson while preserving the structural properties needed for analysis. We give a near-exact privacy analysis using Monte Carlo accounting, based on a dynamic program that leverages the Markovian structure in the subsampling procedure. We show that $b$-min-sep matches cyclic Poisson subsampling in the high noise regime and achieves strictly better guarantees in the mid-to-low noise regime, with experimental results that bolster our claims. We further show that unlike previous BandMF subsampling schemes, our $b$-min-sep subsampling naturally extends to the multi-attribution user-level privacy setting.
\end{abstract}

\section{Introduction}

Differentially private stochastic gradient descent (DP-SGD) is the de facto algorithm for training models with provable privacy guarantees. Compared to standard SGD, DP-SGD clips per-example gradients and adds noise to the aggregated batch gradient. There are several complementary techniques for improving DP-SGD. Representative examples include (1) privacy amplification through subsampling \cite{DP-DL}, which exploits randomness in batching to achieve similar privacy guarantees as deterministic batching with less noise, (2) leveraging algorithmic randomness in the design of the training process \cite{chen2023privacy, dong2025leveraging}, and (3) DP-MF \cite{denisov2022improved}, which correlates the noise added to batch gradients across iterations so that noise injected in one iteration is partially canceled in subsequent iterations.

While DP-MF was initially proposed to achieve good utility in settings where privacy amplification is infeasible, DP-SGD can simultaneously benefit from privacy amplification and DP-MF. However, the standard analysis of privacy amplification uses composition theorems which rely on independence across iterations — an assumption violated by correlated noise in DP-MF. In turn, new analyses and new sampling schemes which are more amenable to analysis are necessary to effectively combine privacy amplification and DP-MF.

In this work, we introduce \textit{$b$-min-sep subsampling}, a simple yet powerful subsampling scheme that reconciles privacy amplification with DP-MF. It generalizes Poisson and balls-in-bins subsampling, two standard sampling schemes for DP-SGD. We focus on BandMF, a variant of DP-MF of interest. For BandMF, 
$b$-min-sep subsampling preserves the local participation constraint required for tractable privacy analysis, while avoiding the additional restrictions imposed by cyclic Poisson subsampling, the scheme originally proposed for BandMF. As a result, it enables stronger privacy amplification.

Crucially, we give a \textit{near-exact} privacy analysis for BandMF with $b$-min-sep subsampling (up to slack due to Monte Carlo accounting). We prove that $b$-min-sep matches cyclic Poisson subsampling in the high-noise regime, strictly improves upon it in the low-noise regime, and therefore Pareto dominates prior approaches in ground-truth privacy. Empirically, we observe consistent gains in utility at moderate-to-large privacy budgets.

Finally, unlike prior amplification schemes for BandMF, $b$-min-sep naturally extends to the multi-attribution user-level privacy setting, where examples may be shared by multiple users. Previous methods either break down or require prohibitive restrictions that degrade utility too much. Taken together, our results position $b$-min-sep subsampling as a unified, principled, and practical foundation for privacy amplification in correlated-noise mechanisms.

\section{Background}

Before discussing our results in further detail, we review the necessary background on DP-SGD, DP-MF, and privacy amplification.

\subsection{Differential Privacy}

The canonical $(\varepsilon, \delta)$-DP definition \cite{DMNS} has many different forms. We use the following based on the zero-out adjacency (as is now common in the DP training literature \cite{ponomareva23dpfy, pillutla2025correlatednoisemechanismsdifferentially}) throughout the paper:

\begin{definition}
    Given a data universe $\calD$, we say that two datasets $D, D' \in \calD^n$ are adjacent under the zero-out adjacency, denoted $D \simeq D'$ if $D'$ is the same as $D$ except with exactly one element replaced with $\bot$, a special element whose gradients are zero everywhere, i.e. $\nabla \ell(\theta; \bot) = \boldzero$ for all $\theta$.
\end{definition}

\begin{definition}
    Given two distributions $P, Q$ over $\calO$, the $\alpha$-hockey stick divergence between them is
    \[H_\alpha(P, Q) := \int_\calO \max\{P(y) - \alpha Q(y), 0\}\ dy = \mathbb{E}_{y \sim P} \left[\max\left\{1 - \alpha \frac{Q(y)}{P(y)}, 0\right\}\right].\]

    Given a mechanism $\calM: \calD^* \rightarrow \Delta(\calO)$ where $\Delta(\calO)$ denotes the set of distributions over the set $\calO$, we say $\calM$ satisfies $(\varepsilon, \delta)$-DP (with respect to the zero-out adjacency) if for all $D \simeq D'$, $H_{e^\varepsilon}(\calM(D), \calM(D')) \leq \delta.$
\end{definition}
We note that the above is completely equivalent to the standard $(\epsilon, \delta)$-DP definition (and just presented in a different way) but is more convenient to work with for Monte Carlo accounting, which is the technique we will use.

\subsection{DP-SGD}

DP-SGD \cite{song2013stochastic, BST14, DP-DL} is a now-canonical method for private machine learning with $(\varepsilon, \delta)$-DP. We assume a sequence of batches $B_1, B_2, \ldots, B_n$ to be used in DP-SGD is given to us. In iteration $i$, letting $\clip{\theta}{C} := \theta \cdot \frac{C}{\max\{\ltwo{\theta}, C\}}$, given clip norm $C$ and noise multiplier $\sigma$ we compute the following noisy gradient sum on the batch $B_i$ and the current model $\theta_{i-1}$:

\[\tilde{g}_i = \sum_{d \in B_i} \clip{\nabla \ell(\theta_{i-1}; d)}{C} + \bfz_i, \quad\quad \bfz_i \sim \calN(\boldzero, \sigma^2 C^2 \bfI).\]

$\tilde{g}_i$ can then be passed as a regular gradient would to any underlying first-order optimizer (e.g. SGD or Adam). Without loss of generality, we will implicitly assume $C = 1$ for the rest of the paper.

\subsection{DP-MF}

DP matrix factorization (DP-MF) \cite{denisov2022improved, choquette2022multi, choquette2024amplified} is a variant of DP-SGD that adds correlated noise in each iteration rather than independent noise. Namely, we have an additional hyperparameter $\bfC \in \mathbb{R}^{n \times n}$ referred to as the strategy matrix; as is standard, we will assume throughout the paper $\bfC$ is lower-triangular and non-negative which enables adaptivity of DP-MF and privacy amplification as discussed in \cite{denisov2022improved, choquette2023privacy}. Given noise samples $\bfz_1, \bfz_2, \ldots, \bfz_t \sim \calN(\boldzero, \sigma^2 \bfI)$, let $\bfZ$ denote the matrix formed by stacking these samples, i.e. the $i$th row of $\bfZ$ is $\bfz_i$. Rather than add noise $\bfz_i$ in iteration $i$, we now add noise $(\bfC^{-1} \bfZ)[i, :]$, i.e. the $i$th row of $\bfC^{-1} \bfZ$; the choice to multiply by $\bfC^{-1}$ rather than $\bfC$ will be made clear when we later discuss the privacy analysis of DP-SGD and DP-MF. A typical choice of $\bfC^{-1}$ will have a positive main diagonal and negative below-diagonal entries, which corresponds to adding some noise in one iteration and partially cancelling it out in subsequent iterations. The choice $\bfC = \bfI$ retrieves DP-SGD.

The key advantage of DP-MF is that it trades off per-iteration noise magnitude against structured noise correlations across iterations. As $\bfC$ departs from the identity, correlations reduce the variance of cumulative noise over multiple iterations, which directly controls the noise in model parameters, but require a larger per-iteration noise multiplier $\sigma(\bfC)$ to maintain a fixed privacy guarantee. DP-SGD corresponds to the extreme choice $\bfC=\bfI$, which minimizes per-iteration noise $\sigma(\bfC)$ but accumulates noise rapidly over multiple iterations. DP-MF improves utility by selecting $\bfC$ in an intermediate regime that optimally balances these effects. Equivalently, DP-MF can be viewed as optimizing $\bfC$ to minimize the total effect of noise over all iterations under $(\varepsilon,\delta)$-DP, whereas DP-SGD corresponds to a fixed, non-optimized choice. 

A common measure of the utility of a particular choice of $\bfC$ is the mean squared error (MSE) of its prefix sums, as proposed in \cite{denisov2022improved}. That is, let $\bfA$ be the all-ones lower-triangular matrix. Let $\sigma(\bfC)$ be the noise multiplier necessary for achieving a target privacy guarantee using DP-MF with a given choice of $\bfC$ (which we will discuss how to compute in the next section). $\bfA \bfC^{-1} \bfZ$ is a matrix whose $i$th row is the sum of the first $i$ noises added in DP-MF, so the MSE of the prefix sums is given by $\|\bfA \bfC^{-1}\|_F^2 \sigma(\bfC)^2$.

The choice of $\bfC$ also impacts the efficiency of DP-MF (depending on the time and memory efficiency of computing rows of $\bfC^{-1} \bfZ$). A large number of works \cite{denisov2022improved, choquette2022multi, choquette2024amplified, choquette2023correlated, dvijotham2024efficient, henzinger2023almost, henzinger2024unifying, henzinger25improved, kalinin2024banded, kalinin2025squarerootsoptimalbound, gu2025correlatingcrossiterationnoisedpsgd} propose different methods for choosing $\bfC$ offering different tradeoffs between efficiency, utility, and other factors. For this paper we will focus on banded Toeplitz $\bfC$ \cite{choquette2024amplified}. A (lower triangular) matrix $\bfC$ is $b$-banded if only its first $b$ diagonals are non-zero, i.e. $\bfC_{i,j} \neq 0$ only if $i - b + 1 \leq j \leq i$, and Toeplitz if all its diagonals are constant-valued, i.e. $\bfC_{i,i+k} = \bfC_{i', i'+k}$ for all $i, i', k$. DP-MF using banded Toeplitz matrices is of particular interest because it can be implemented efficiently and with only a small loss in utility compared to more general families of matrices \cite{mckenna2025scaling}, and banded matrices are particularly amenable to privacy amplification \cite{choquette2024amplified, choquette2024near}. We will sometimes use BandMF to refer to DP-MF when using a banded $\bfC$, and $b$-BandMF when the number of bands $b$ is specified.

\subsection{Privacy Accounting and Amplification}

To perform privacy accounting, i.e. determine what $(\varepsilon, \delta)$-DP guarantees are satisfied by a given instantiation of DP-SGD/DP-MF, since the final model is a post-processing of the privatized gradients $\tilde{\bfg}_i$ it suffices to analyze the privacy of releasing the sequence of privatized gradients $\tilde{\bfg}_1, \tilde{\bfg}_2, \ldots, \tilde{\bfg}_n$. Without loss of generality we will focus on privacy accounting for DP-MF \cite{choquette2023privacy}. Let $\bfG$ be the matrix whose $i$th row is the sum of clipped gradients in iteration $i$. The sequence of privatized gradients can equivalently be expressed as $\bfG + \bfC^{-1} \bfZ$. By post-processing, it is equivalent to consider the privacy of the matrix mechanism $\bfC \bfG + \bfZ$. By e.g. Lemma 4.5 of \cite{choquette2023privacy}, it suffices to analyze the (worst-)case where all examples in $D \cap D'$ have zero gradients, and $\bfG$ is one-dimensional and non-adaptive. Equivalently, the worst-case privacy analysis across all pairs of datasets is achieved by the matrix mechanism $\bfC \bfx + \bfz$ where now the noise $\bfz$ is a vector, and $\bfx$ is all-zeros for $D'$ and for $D$ it is $1$ in coordinates corresponding to iterations where the example in $D \setminus D'$ participates, and $0$ elsewhere. When $\bfx$ is deterministic for $D$, the matrix mechanism $\bfC \bfx + \bfz$ is just a Gaussian mechanism and can be easily analyzed using standard tools \cite{balle2018improving}. Generally, as $\bfC$ becomes less diagonal, this Gaussian mechanism's sensitivity increases, requiring larger $\sigma(\bfC)$. 

Privacy amplification (e.g. by sampling) is the observation that forming batches randomly, i.e. $\bfx$ being a random vector instead of a deterministic vector, generally improves the privacy guarantee (or equivalently, reduces the $\sigma$ needed to achieve a target privacy guarantee). A standard sampling scheme enabling privacy amplification of DP-SGD is Poisson subsampling \cite{kasiviswanathan2008what}, where the $i$th batch $B_i$ is formed by independently (across examples and across iterations) including each example in $D$ with probability $p$. For DP-SGD with Poisson subsampling (i.e. $\bfC = \bfI$) we can analyze the privacy of $\bfx + \bfz$ by computing a privacy guarantee for releasing a single coordinate of $\bfx + \bfz$ (which, since this is a single-dimensional output, can be done near-exactly using e.g. numerical integration \cite{doroshenko2022connect}), and then applying a composition theorem which bootstraps the privacy guarantee for a single coordinate into a privacy guarantee for the whole mechanism. However, applying the composition theorem assumes each coordinate is independent of each other, which no longer holds for $\bfC \bfx + \bfz$ if $\bfC$ is not diagonal. 

\subsection{Monte Carlo Accounting}

Monte Carlo accounting \cite{wang2023randomized} is a privacy analysis technique which \cite{choquette2024near} observed is useful for privacy analysis of matrix mechanisms because it bypasses the need for composition theorems. The key idea behind Monte Carlo accounting is that the hockey-stick divergence\footnote{We really want to compute the maximum of $H_{e^\varepsilon}(P, Q)$ and $H_{e^\varepsilon}(Q, P)$; for simplicity we focus on one direction in the paper and it is straightforward to extend our techniques to the other direction.} $\mathbb{E}_{y \sim P}\left[\max\left\{1-e^\varepsilon \frac{Q(y)}{P(y)}, 0\right\}\right]$ may be hard to compute, but if we can efficiently sample $y \sim P$ and compute $Q(y)/P(y)$ we can estimate the hockey stick divergence using Monte Carlo estimation, i.e. taking many samples of $\max\left\{1-e^\varepsilon \frac{Q(y)}{P(y)}, 0\right\}$ and using their average as an estimate of $\delta$.

While using this estimate alone does not satisfy proper privacy guarantees, \cite{wang2023randomized} give an ``Estimate-Verify-Release'' (EVR) framework to turn a Monte Carlo estimate of $\delta$ into a formal privacy guarantee. EVR takes a given mechanism, runs Monte Carlo estimation as a verifier to verify the mechanism satisfies $(\varepsilon, \tau \delta)$-DP ($\tau < 1$), then runs the mechanism only if the verifier succeeds, otherwise returning an empty output. If for any mechanism violating $(\varepsilon, \delta)$-DP, the verifier succeeds with probability at most $\delta$, then the overall EVR framework is $(\varepsilon, \delta)$-DP. Fact 4.2 in \cite{chua25ballsandbins} gives a bound on the failure probability of Monte Carlo estimation as a verifier in terms of the number of Monte Carlo samples and $\delta$. We give a slight variant of EVR in \cref{section:evr} which removes the possibility of an empty output and allows us to combine tuning the noise multiplier with the verification step. Our formal discussion of EVR can be skipped by most readers; to understand the main results in the paper, it suffices to understand that EVR implies that (i) taking samples $y \sim P$ and computing $Q(y)/P(y)$ are sufficient primitives for choosing a noise multiplier that satisfies a formal end-to-end DP guarantee (ii) this procedure results in a slight overestimate of the necessary noise multiplier, which becomes exact in the limit as the number of Monte Carlo samples goes to infinity.

The need to efficiently evaluate the likelihood ratio $Q/P$ is a non-trivial bottleneck to combining Monte Carlo accounting with DP-MF. In Monte Carlo accounting of DP-MF, if $p_\bfx$ gives the probability of sampling a given $\bfx$ value, then we have e.g. $P = \sum_{\bfx} p_\bfx \cdot \calN(\bfC \bfx, \sigma^2 \bfI)$, $Q = \calN(\boldzero, \sigma^2 \bfI)$. For e.g. Poisson subsampling, $\bfx$ can take on $2^n$ values, hence evaluating this sum seemingly takes exponential time. Hence, even with Monte Carlo accounting one needs to be careful in designing the sampling scheme to be amenable to privacy analysis.

\section{Past Work and Our Contributions}

We first reflect on past approaches for combining privacy amplification with DP-MF and where there is room for improvement.

\mypar{Cyclic Poisson subsampling of \cite{choquette2024amplified}} In cyclic Poisson subsampling, we partition the dataset into $b$ subsets $D_1, \ldots D_b$, and in iteration $i$ Poisson sample from $D_{i \pmod b}$\footnote{Abusing notation and letting $i \pmod b = b$ if $i$ is a multiple of $b$.} to form the $i$-th batch. For banded Toeplitz $\bfC$, without loss of generality\footnote{As the worst case for privacy is that the example in $D \setminus D'$ is $D_1$.}, the privacy analysis reduces to choosing $\bfx$ such that $\bfx_1, \bfx_{b+1}, \bfx_{2b+1} \ldots$ are each sampled from $\Bern(p)$ and all other entries of $\bfx$ are 0. When combined with $b$-BandMF, the coordinates of the output $\bfC \bfx + \bfz$ that are affected by each of $\bfx_1, \bfx_{b+1}, \bfx_{2b+1} \ldots$ are disjoint sets. In turn, applying cyclic Poisson subsampling to $b$-BandMF can be analyzed similarly to DP-SGD with Poisson subsampling, as it is the composition of $n/b$ independent events. However, since we are sampling from datasets that are $b$ times smaller, we need to use a sampling probability $b$ times larger than DP-SGD's to achieve the same expected batch size, which leads to worse amplification than DP-SGD. An equivalent viewpoint is that because we are sampling a same-size batch from a smaller dataset in each iteration, there is less randomness in the sampling process and thus less potential for amplification.

\mypar{Conditional composition of \cite{choquette2023privacy}} Conditional composition attempts to use composition theorems to analyze amplified DP-MF by (1) finding a mechanism for each iteration such that with high probability, the mechanism outputting the $i$th coordinate of of $\bfC \bfx + \bfz$ conditioned on the first $i-1$ coordinates is dominated by this mechanism, and then (2) composing the privacy guarantees of the dominating mechanisms to arrive at privacy guarantee for the overall mechanism $\bfC$, absorbing the high probability guarantee into $\delta$. While conditional composition allows for arbitrary $\bfC$ and flexible sampling schemes, the fact that it effectively conditions on near-worst case events to build its privacy guarantee means there can be a lot of slack in the analysis. Since the amount of slack introduced is too large in our experimental settings, we do not consider conditional composition as a competitive baseline for the remainder of the paper.

\mypar{Balls-in-bins subsampling of \cite{choquette2024near}} Balls-in-bins subsampling mimics shuffling but keeps the participation of each example independent from each other. Given an epoch length $T$, each example is independently assigned a uniformly random index $i$ in $[T]$, and then participates in iterations $\{i, T+i, 2T+i \ldots\}$. \cite{choquette2024near} observed that balls-in-bins is amenable to Monte Carlo accounting because there are only $T$ possible values of $\bfx$, hence the likelihood ratio $Q/P$ can be evaluated efficiently. While in their experiments balls-in-bins seems to empirically outperform cyclic Poisson subsampling, it uses less randomness than Poisson subsampling\footnote{Balls-in-bins is equivalent to Poisson sampling conditioned on examples participating once per epoch; this conditioning reduces the randomness.}, which suggests that Poisson subsampling or similar sampling schemes could still achieve even better privacy amplification.

\mypar{Our contributions} The goal of our work is to improve the combination of privacy amplification and DP-MF. Our contributions are as follows:

\begin{itemize}
    \item In \cref{section:bms}, we propose $b$-min-sep subsampling, and demonstrate that it is a generalization and unification of Poisson subsampling and balls-in-bins subsampling. We also show that it achieves strictly stronger privacy guarantees than cyclic Poisson in the limit as $\sigma \rightarrow 0$, and approaches the privacy analysis of cyclic Poisson in the limit as $\sigma \rightarrow \infty$. Interpolating between these extremes suggests that $b$-min-sep subsampling Pareto dominates all past practical approaches for privacy amplification of BandMF.
    \item In \cref{section:montecarlo}, we give an efficient implementation of Monte Carlo accounting for BandMF with $b$-min-sep subsampling. The key technical challenge is that $b$-min-sep subsampling can generate exponentially many different values of $\bfx$, which means directly computing $Q/P$ via summation over $\bfx$ is infeasible. Our main insight is that the summation can be subverted by instead solving a dynamic program.
    \item In \cref{section:multi-attribution} we consider the multi-attribution setting, a generalization of user-level DP introduced in \cite{ganesh2025s} where each example can be attributed to multiple users. While cyclic Poisson and balls-in-bins are inherently challenging to apply to the multi-attribution setting, we demonstrate that $b$-min-sep subsampling and its analysis easily extend to the multi-attribution setting. 
    \item In \cref{section:experiments} we conduct experiments showing that both in terms of MSE and the test accuracy of a standard CIFAR benchmark, $b$-min-sep subsampling offers improvements over balls-in-bins and cyclic Poisson in the high-epsilon regime, however cyclic Poisson tends to perform better in the low $\varepsilon$ regime. However, we also demonstrate that the improvement of cyclic Poisson is only due to the slack in Monte Carlo accounting, and $b$-min-sep subsampling outperforms cyclic Poisson at all $\varepsilon$ if we optimistically assume no error in Monte Carlo estimation. We also conduct experiments showing that $b$-min-sep subsampling's unique ability to utilize both privacy amplification and correlated noise in the multi-attribution setting realizes in significant improvements for fine-tuning on the arXiv dataset.
\end{itemize}

\section{$b$-min-sep subsampling}\label{section:bms}

We now introduce $b$-min-sep subsampling, a new subsampling scheme designed to enforce the
$b$-min-sep participation constraint. The $b$-min-sep property itself, that no example participates in iterations $i \neq j$ with $|i-j|<b$, was previously introduced as a structural condition for analyzing
$b$-BandMF \cite{choquette2024amplified}. When this property holds, repeated participations of the same example do not affect overlapping coordinates of $\bfC \bfx + \bfz$, which significantly simplifies privacy analysis.

Our proposed $b$-min-sep subsampling\footnote{The $b$-min-sep subsampling scheme should not be confused with the $b$-min-sep property introduced in prior work; we adopt the same name because the scheme enforces this property by construction.} (\cref{fig:bms_basic}) is a surprisingly simple and natural procedure that enforces this property: perform Poisson subsampling in each iteration, but exclude any example that participated in the previous $b-1$ iterations.

\begin{figure}[H]
\begin{algorithm}[H]
\caption{$b$-min-sep subsampling}
\label{fig:bms_basic}
\textbf{Parameters:} Sampling probability $p$, minimum separation parameter $b$, dataset $D = \{e_1, e_2, \ldots e_m\}$, total iterations $n$.
\begin{algorithmic}[1]
\For{$i \in [n]$}
\State $D^{exclude} = \bigcup_{j = \max\{1, \: i-b+1\}}^{i-1} B_j$
\State $B_i \leftarrow$ include each element of $D \setminus D^{exclude}$ independently with probability $p$.
\EndFor
\State \Return batches $B_1, B_2, \ldots B_n$.
\end{algorithmic}
\end{algorithm}
\end{figure}

An alternative way to view $b$-min-sep subsampling is through the lens of a Markov chain that tracks the availability state of a single example across iterations. Consider states $\{0, 1, \ldots, b-1\}$, where state $0$ means the example is available for Poisson subsampling, and state $i > 0$ means it is barred from participation for the next $i$ iterations, with transition probabilities given in Figure~\ref{fig:bms_markov}. The example participates when it transitions to state $b-1$. The expected return time to state $b-1$ is $(b-1) + 1/p$, so given a dataset $D$ and a target expected batch size $\bar{B} = |D|p_0$, we should set $1/p_0 = 1/p + (b-1)$, or equivalently $p = \frac{p_0}{1-p_0(b-1)}$.

\begin{figure}[h]
\centering
\begin{tikzpicture}[>=stealth, node distance=2.1cm]

\node[circle,draw] (s0) {$0$};
\node[circle,draw,right of=s0] (s1) {$1$};
\node[circle,draw,right of=s1] (s2) {$2$};
\node[right of=s2,node distance=1.2cm] (dots) {$\cdots$};
\node[circle,draw,right of=dots,node distance=1.6cm] (sb) {$b\!-\!1$};

\draw[->] (s0) edge[loop above] node {$1-p$} (s0);
\draw[->] (s0) edge[bend right=20] node[above] {$p$} (sb);

\draw[->] (s1) -- node[above] {$1$} (s0);
\draw[->] (s2) -- node[above] {$1$} (s1);
\draw[->] (sb) -- node[above] {$1$} (dots);
\draw[->] (dots) -- node[above] {$1$} (s2);

\end{tikzpicture}
\caption{Markov chain governing the availability state of a single example in $b$-min-sep subsampling.}
\label{fig:bms_markov}
\end{figure}

A minor inconvenience of Algorithm~\ref{fig:bms_basic} is that all examples start in state $0$, so the expected batch size varies greatly in the starting iterations, until the Markov chain for each example approaches stationary distribution and then the batch size stays roughly constant. The varying batch size may cause inconvenience in implementation or hinder model convergence. To avoid it, we propose Algorithm~\ref{fig:bms_ws}, a ``warm-start'' variant of $b$-min-sep subsampling that starts each example independently from the stationary distribution so the expected batch size is always constant. Equivalently, we assign a ``virtual history'' to each example according to the stationary distribution.

\begin{figure}[H]
\begin{algorithm}[H]
\caption{``Warm-start'' $b$-min-sep subsampling}
\label{fig:bms_ws}
\textbf{Parameters:} Sampling probability $p$, minimum separation parameter $b$, dataset $D = \{e_1, e_2, \ldots e_m\}$, total iterations $n$.
\begin{algorithmic}[1]
\For{$j \in [m]$}
\State $s_j = 0 \text{ w.p. }\frac{1}{1 + (b-1)p}, \text{otherwise } s_j \sim \Unif(\{1, 2, \ldots, b-1\})$
\EndFor
\For{$i \in [b-1]$}
\State $B_{1-i} \leftarrow \{e_j: s_j = i\}$ (virtual history)
\EndFor
\For{$i \in [n]$}
\State $D^{exclude} = \bigcup_{j = i-b+1}^{i-1} B_j$
\State $B_i \leftarrow$ include each element of $D \setminus D^{exclude}$ independently with probability $p$.
\EndFor
\State \Return batches $B_1, B_2, \ldots B_n$.
\end{algorithmic}
\end{algorithm}
\end{figure}

\cref{fig:bms_ws} can be viewed as an interpolation between Poisson subsampling and balls-in-bins subsampling as defined in \cite{chua25ballsandbins, choquette2024near}. When $b = 1$, we retrieve Poisson subsampling. When $p = 1$ and $p_0 = 1 / b$, the warm-start in \cref{fig:bms_ws} reduces to forming $b$ batches up front by assigning each example to one of the $b$ batches uniformly at random, and then \cref{fig:bms_ws} will cycle through these same batches, exactly as balls-in-bins subsampling does. So, since we will give an exact privacy analysis for $b$-min-sep subsampling, our overall framework can be viewed as a generalization and unification of both privacy amplification via Poisson subsampling and privacy amplification via balls-in-bins subsampling. 

\subsection{Comparison to Cyclic Poisson subsampling}

We give a thorough comparison between $b$-min-sep and cyclic Poisson subsampling. Intuitively, the advantage of $b$-min-sep subsampling is that it still satisfies the $b$-min-sep property (which as we will show in \cref{section:montecarlo} allows an exact privacy analysis when combined with $b$-BandMF), but it is able to sample from a larger pool of examples in each iteration and hence use a smaller sampling probability than cyclic Poisson subsampling, and smaller sampling probabilities lead to better privacy amplification. One way to see this is that cyclic Poisson subsampling excludes all examples which \textit{could} have participated in the past $b-1$ iterations, even if they didn't, whereas $b$-min-sep subsampling only excludes the examples which \textit{did} participate in the past $b-1$ iterations.

To further quantify the gain: if we use an expected batch size of $B$, cyclic Poisson subsampling will only be able to sample from $|D|/b$ examples each iteration whereas $b$-min-sep subsampling has at least $|D| - (b-1)B$ examples available for sampling (in expectation). Since $B \leq |D| / b$ is needed for both $b$-min-sep and an expected batch size of $B$ to be feasible and this inequality implies $|D|/b \leq |D| - (b-1)B$, this means $b$-min-sep subsampling always has at least as many examples available for sampling as cyclic Poisson subsampling. However, if $bB$ is much smaller than $D$, $b$-min-sep subsampling will have $\approx b$ times as many examples available for sampling in each iteration. This enables us to use smaller per-example sampling probabilities and hence should lend itself to better privacy amplification.

While it is challenging to formally show that $b$-min-sep subsampling leads to better privacy guarantees than cyclic Poisson subsampling in all instantiations of $b$-BandMF, we can consider two extreme cases that are easier to asymptotically compare the two methods in: the limit as $\sigma \to 0$ and the limit as $\sigma \to \infty$. For a meaningful comparison, we use different subsampling probabilities so that the expected batch size per iteration is the same for both methods. We fix an average participation rate of $p_0$. Cyclic Poisson subsampling uses a subsampling probability of $p_{\mathrm{cyclic}} = b p_0$ (whenever we are in an iteration where the example can participate). For $b$-min-sep subsampling, as previously derived we have $p_{\mathrm{bMS}} = \frac{1}{1/p_0 - (b-1)} = \frac{p_0}{1 - p_0 (b-1)}$.

\mypar{Limit as $\sigma \to 0$} We first consider the small-noise regime $\sigma \to 0$, which corresponds to high utility and large privacy budget. In this limit, the random variable $\bfy = \bfC \bfx + \bfz$ is dominated by $\bfC \bfx$. The privacy loss random variable can be written as
\[
L(\bfy)
= \ln \left( \sum_{\bfx} p_{\bfx} \exp\!\left( \frac{2 \langle \bfC \bfx, \bfy \rangle - \|\bfC \bfx\|_2^2}{2\sigma^2} \right) \right),
\]
where $\bfx$ ranges over all valid binary participation vectors. As $\sigma \to 0$, standard Laplace-type arguments imply that the log-sum-exp expression is dominated by the single term with the largest exponent, since all other terms are exponentially suppressed by $\exp \left(-1/\sigma^2 \right)$ if we factor out the largest term. Equivalently, with overwhelming probability, $\bfy$ lies within $O(\sigma)$ of exactly one mode $\bfC \bfx$, and contributions from all other modes to $L(\bfy)$ are negligible.

Consequently, except on an exponentially small event, the privacy loss is well approximated by
\[
L(\bfy) \approx \frac{\|\bfC \bfx\|_2^2}{2\sigma^2} + \ln p_{\bfx}
\quad \text{for the $\bfx$ such that } \bfy \approx \bfC \bfx.
\]
Thus, up to lower-order terms in $\sigma$, in the limit as $\sigma \to 0$ an $(\varepsilon, \delta)$-DP guarantee is equivalent to a $1 - \delta$ probability tail bound on $\|\bfC \bfx\|_2^2$. Under both subsampling schemes, any valid participation vector $\bfx$ has some $1$ entries separated by at least $b$ $0$s. Since $\bfC$ is a $b$-banded Toeplitz matrix, this implies
\[
\|\bfC \bfx\|_2^2 \approx \|\bfc_1\|_2^2 \cdot \|\bfx\|_2^2,
\]
where $\bfc_1$ is the first column of $\bfC$ and $\|\bfx\|_2^2$ is exactly the total number of participations of a given example. The approximation here is only due to the fact that the last $b-1$ columns of $\bfC$ have a different norm, and when $n \gg b$ this approximation is minor. Therefore, in the $\sigma \to 0$ regime, comparing the DP guarantees for the sampling schemes reduces to comparing tail bounds on their number of participations. For a fixed expected number of participations, a tail bound is controlled to first order by the variance of the number of participations.

Under cyclic Poisson subsampling, an example has $t/b$ independent opportunities to participate over $t$ iterations, each with probability $p_{\mathrm{cyclic}} = b p_0$. The participation count is therefore binomial with variance
\[
\mathrm{Var}_{\mathrm{cyclic}}
= \frac{t}{b} \, p_{\mathrm{cyclic}} (1 - p_{\mathrm{cyclic}})
= n \, p_0 (1 - b p_0).
\]

On the other hand, under $b$-min-sep subsampling, participations form a renewal process with i.i.d.\ inter-arrival times given by a geometric random variable with success probability $p_{\mathrm{bMS}}$, shifted by $(b-1)$. The inter-arrival time has mean
\[
\mu_{\mathrm{interarrival}} = \frac{1}{p_{\mathrm{bMS}}} + (b-1),
\]
and variance
\[
\sigma_{\mathrm{interarrival}}^2 = \frac{1 - p_{\mathrm{bMS}}}{p_{\mathrm{bMS}}^2}.
\]
Standard renewal theory implies that the asymptotic variance\footnote{We justify the use of the asymptotic variance expression above by noting that modern deep neural network training typically involves a very large number of iterations $n$. In this regime, the number of participations accumulated over training is well described by its long-run behavior. The asymptotic variance arises from classical renewal central limit theorems, which state that, for a renewal process with finite inter-arrival mean and variance, the centered and normalized count of arrivals converges in distribution to a Gaussian as $t \to \infty$, with variance growing linearly in $t$ at rate $\sigma_{\mathrm{interarrival}}^2 / \mu_{\mathrm{interarrival}}^3$. Since the inter-arrival distribution induced by $b$-min-sep has finite moments and $n$ is large in typical training pipelines, this asymptotic approximation provides an accurate characterization of the distribution of the number of participations for the purposes of our comparison.} of the number of participations over $n$ iterations is
\[
\mathrm{Var}_{\mathrm{bMS}}
= \frac{\sigma_{\mathrm{interarrival}}^2}{\mu_{\mathrm{interarrival}}^3} \, n
= \frac{\frac{1 - p_{\mathrm{bMS}}}{p_{\mathrm{bMS}}^2}}{\left( \frac{1}{p_{\mathrm{bMS}}} + (b-1) \right)^3} \, n
= \frac{p_{\mathrm{bMS}}(1 - p_{\mathrm{bMS}})}{\bigl(1 + p_{\mathrm{bMS}} (b-1)\bigr)^3} \, n.
\]

Recalling that $p_{\mathrm{bMS}} = \frac{p_0}{1 - p_0 (b-1)}$, we substitute to obtain
\[
\mathrm{Var}_{\mathrm{bMS}}
= \frac{\frac{p_0}{1 - p_0 (b-1)} \cdot \frac{1 - b p_0}{1 - p_0 (b-1)}}{\left( \frac{1}{1 - p_0 (b-1)} \right)^3} \, n
= n \, p_0 (1 - b p_0)\bigl(1 - p_0 (b-1)\bigr) = \mathrm{Var}_{\mathrm{cyclic}}\bigl(1 - p_0 (b-1)\bigr).
\]
Since $p_0 \in (0, 1/b)$ and $b \geq 1$, we have $1 - p_0 (b-1) < 1$, and therefore $\mathrm{Var}_{\mathrm{bMS}} < \mathrm{Var}_{\mathrm{cyclic}}$, with equality only when $b = 1$, which is what we expect since the two subsampling schemes coincide. This implies that $b$-min-sep subsampling achieves a better tail bound for the number of participations $\|\bfx\|_2^2$, and hence better $(\varepsilon, \delta)$-DP guarantees in the limit as $\sigma \to 0$.

\vspace{1em}
\mypar{Limit as $\sigma \to \infty$} We analyze the high-noise regime $\sigma \to \infty$ by examining the asymptotic behavior of the privacy loss random variable. Under the adjacent dataset lacking the differing example, the output is pure Gaussian noise $\bfy = \bfz \sim \mathcal{N}(\boldzero, \sigma^2 \bfI)$. We reparameterize the noise as $\bfz = \sigma \bfw$, where $\bfw \sim \mathcal{N}(\boldzero, \bfI)$. Under the dataset containing the differing example, the output is $\bfy = \bfC \bfx + \bfz$, where $\bfx$ is the participation vector whose distribution is determined by the subsampling scheme. The likelihood ratio is given by 
\[\frac{P(\bfy)}{Q(\bfy)} = \mathbb{E}_{\bfx \sim P}\left[ \exp\left( \frac{\langle \bfC\bfx, \bfw \rangle}{\sigma} - \frac{\|\bfC\bfx\|_2^2}{2\sigma^2} \right) \right].\]
Taking the Taylor expansion of the exponential function, $\exp(u) = 1 + u + u^2/2 + \mathcal{O}(u^3)$, yields
\[ \frac{P(\bfy)}{Q(\bfy)} = \mathbb{E}_{\bfx \sim P}\left[ 1 + \left( \frac{\langle \bfC\bfx, \bfw \rangle}{\sigma} - \frac{\|\bfC\bfx\|_2^2}{2\sigma^2} \right) + \frac{1}{2}\left( \frac{\langle \bfC\bfx, \bfw \rangle}{\sigma} \right)^2 + \calO(1/\sigma^3) \right]. \]
Let $\mu = \mathbb{E}_{\bfx \sim P}[\bfx]$ denote the expected participation vector. Pushing the expectation through the linear terms gives
\[ \frac{P(\bfy)}{Q(\bfy)} = 1 + \frac{\langle \bfC\mu, \bfw \rangle}{\sigma} + \frac{\mathbb{E}_{\bfx}[\langle \bfC\bfx, \bfw \rangle^2] - \mathbb{E}_{\bfx}[\|\bfC\bfx\|_2^2]}{2\sigma^2} + \calO(1/\sigma^3). \]
To isolate the privacy loss random variable $L(\bfy) = \ln(P(\bfy)/Q(\bfy))$, we apply the logarithmic expansion $\ln(1 + u) = u - u^2/2 + \calO(u^3)$. Substituting, we get 
\[L(\bfy) = \frac{\langle \bfC\mu, \bfw \rangle}{\sigma} + \frac{\mathbb{E}_{\bfx}[\langle \bfC\bfx, \bfw \rangle^2] - \mathbb{E}_{\bfx}[\|\bfC\bfx\|_2^2] - \langle \bfC\mu, \bfw \rangle^2}{2\sigma^2} + \calO(1/\sigma^3). \] 
This asymptotic expression decomposes the privacy loss into a leading $\calO(1/\sigma)$ stochastic term and $\calO(1/\sigma^2)$ second-order terms. As $\sigma \to \infty$, this tail is overwhelmingly dominated by the $\calO(1/\sigma)$ first-order term.

For our comparison, we have matched the expected participation rate $p_0$ between cyclic Poisson and warm-start $b$-min-sep subsampling. For cyclic Poisson with a random initial assignment, an example is assigned a phase uniformly at random from $\{1, \ldots, b\}$ and then sampled with probability $b p_0$ every $b$ iterations. Because the phase is uniform, the marginal probability of participating in any specific iteration $i$ is exactly $\frac{1}{b} \cdot b p_0 = p_0$. For warm-start $b$-min-sep, the initialization explicitly places each example's availability state into its stationary distribution, which guarantees the marginal probability of participation in any iteration is also $p_0$. Therefore, both schemes yield the identical expected participation vector $\mu = p_0 \mathbf{1}$. Consequently, the dominant first-order term $\langle \bfC\mu, \bfw \rangle / \sigma$ is exactly the same for both schemes. While the two methods possess different joint participation distributions that influence the $\calO(1/\sigma^2)$ second-order terms, these higher-order differences are asymptotically dwarfed by the non-zero leading term. Thus, in the limit as $\sigma \to \infty$, the two subsampling schemes converge to identical privacy guarantees.

We remark that the convergence of $b$-min-sep and cyclic Poisson to the same privacy guarantee is not a weakness of our proposed scheme, but rather an indication that both have reached the fundamental limit of the problem. Because $\bfC$ is a $b$-banded Toeplitz matrix, multiplying it by the uniform expected participation vector $\mu = p_0 \mathbf{1}$ yields a vector whose entries are exactly $p_0$ times the sum of the row elements of $\bfC$. For iterations $i \ge b$, this sum is a constant equal to the $\ell_1$ row norm of the bands. While the first $b-1$ entries are smaller due to boundary effects, this difference is insignificant for typical training horizons (for example, $n \approx 1000$ iterations and $b \approx 8$ bands) and can be safely ignored. Without loss of generality, we can assume this row norm is $1$, as any other value acts as a global scaling coefficient that is equivalent to scaling the noise $\sigma$. Under this normalization, the dominant first-order signal for both schemes becomes exactly $p_0 \mathbf{1}$. Crucially, this uniform $p_0 \mathbf{1}$ signal profile matches the normalized signal of full-batch DP-SGD. In full-batch DP-SGD, every example participates in every iteration, but to match the noise scale of a subsampled batch, the aggregate gradient is divided by $1/p_0$ (the ratio of the full dataset size to the expected batch size), effectively scaling the per-example contribution to exactly $p_0 \mathbf{1}$. Since it is known that full-batch DP-SGD achieves the optimal signal-to-noise ratio among all subsampling schemes, by retrieving this $p_0 \mathbf{1}$ profile, $b$-min-sep subsampling achieves the optimal theoretical guarantee available in the $\sigma \to \infty$ regime.

\section{Monte Carlo Accounting for $b$-Min-Sep Subsampling} 
\label{section:montecarlo}

We now give a method for Monte Carlo accounting of $b$-BandMF with $b$-min-sep subsampling. Recall that by Lemma 4.5 of \cite{choquette2023privacy}, it suffices to compute the divergence between $P$ and $Q$ induced by the matrix mechanism $\bfy = \bfC \bfx + \bfz$, where for $P$, $\bfx$'s coordinates are $1$ in iterations where an example is sampled by $b$-min-sep subsampling and $0$ otherwise, and for $Q$, $\bfx$ is the all-zeros vector. For both $P$ and $Q$, $\bfz$ is Gaussian noise. Also recall that for Monte Carlo accounting, the two primitives we need are to be able to sample $\bfy \sim P$, and be able to compute $P(\bfy) / Q(\bfy)$ given a sample $\bfy$. Sampling $\bfy \sim P$ (or $\bfy \sim Q$) is straightforward and efficient\footnote{For arbitrary $\bfC$ and sampling scheme for $\bfx$ it might take $O(n^2)$ time to compute the matrix-vector product $\bfC \bfx$, but because $b$-min-sep subsampling enforces the $b$-min-sep property and $\bfC$ is banded and Toeplitz, it can easily be done in $O(n)$ time.}. The main challenge is computing $P(\bfy) / Q(\bfy)$. To evaluate $P(\bfy)$, we could naively evaluate each term in the sum $P(\bfy) = \sum_\bfx p_\bfx \cdot \calN(\bfC \bfx, \sigma^2 \bfI)(\bfy - \bfC \bfx)$ where $p_\bfx$ denotes the probability of sampling $\bfx$ under $b$-min-sep subsampling, and $\calN(\mu, \sigma^2 \bfI)(\bfz)$ denotes the likelihood of $\bfz$ under $\calN(\mu, \sigma^2 \bfI)$. However, there are exponentially many different values of $\bfx$ that can arise in this sum, hence this method of evaluating $P$ is infeasible in practice (in contrast to e.g. balls-in-bins).

We provide an efficient method that evaluates $P(\bfy)$ using dynamic programming. For $i \in \{1, \ldots, n\}$ where $n$ is the total number of iterations, let $P_i$ and $Q_i$ denote the marginal distributions of $P$ and $Q$ for the remaining iterations $i, \ldots, n$. Let $\mathcal{A}_i$ denote the event that the example of interest has not participated in iterations $i-1, i-2, \ldots, i-b+1$ (so it is available to participate in the $i$th iteration). Define
\[f_i(\bfy) := \frac{P_i(\bfy \mid \mathcal{A}_i)}{Q_i(\bfy \mid \mathcal{A}_i)} = \frac{P_i(\bfy \mid \mathcal{A}_i)}{Q_i(\bfy)}.\]
That is, the likelihood ratio of the remaining process starting at iteration $i$, given that the example has not participated in the previous $b-1$ iterations, where $\bfy = (\bfy_1, \ldots, \bfy_n)$ is a sequence of $n$ outputs of the mechanism. We note that given $\mathcal{A}_i$, the quantities $P_i, Q_i$ are functions of only $\bfy_i, \ldots, \bfy_n$ and are unaffected by $\bfy_1, \ldots, \bfy_{i-1}$ (even though we write the argument as $\bfy$ for convenience), which we will prove. We define a recursion for $f_i$ given by
\begin{equation}\label{eq:mainrecursion}
f_i(\bfy) = (1 - p) \cdot f_{i+1}(\bfy)
+ p \cdot \frac{\mathcal{N}(c_i, \sigma^2 I)}{\mathcal{N}(\mathbf{0}, \sigma^2 I)}(\bfy_i, \ldots, \bfy_{i+(b-1)}) \cdot f_{i+b}(\bfy),
\end{equation}
with the boundary condition $f_i(\bfy) = 1$ for all $i > n$, where $c_i \in \mathbb{R}^b$ is the vector formed by taking the nonzero entries from the $i$th column of $\bfC$ (with right zero-padding if it is one of the last $b-1$ columns).

In the recursion, the first term corresponds to the case in which the example does not participate in iteration $i$, while the second term corresponds to the case in which it does and skips the following $b-1$ iterations before re-entering at $i+b$, contributing a likelihood ratio of $\mathcal{N}(c_i, \sigma^2 I)/\mathcal{N}(\mathbf{0}, \sigma^2 I)(\bfy_i, \ldots, \bfy_{i+(b-1)})$ for the $b$ iterations.

\begin{theorem}\label{thm:dp-correctness}
For a given $b$-banded (and non-negative, lower-triangular) $\bfC$, and any output $\bfy$, we have that \cref{eq:mainrecursion} holds (and thus $f_1(\bfy) = P(\bfy) / Q(\bfy)$) when using \cref{fig:bms_basic} to sample $\bfx$ in $P$.
\end{theorem}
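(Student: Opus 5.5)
I would prove this by conditioning on the example's first participation at or after iteration $i$ and exploiting the Markov structure of \cref{fig:bms_basic} for the single example of interest. Fix $\bfy$. Under $Q$ the output is $\bfz\sim\calN(\boldzero,\sigma^2\bfI)$, so $Q$ factorizes over coordinates and $Q_i(\bfy)=\prod_{k\ge i}\calN(0,\sigma^2)(\bfy_k)$. Under $P$, $\bfy=\bfC\bfx+\bfz$. Let $\bfx_{\ge i}$ denote the participation pattern from iteration $i$ on. The first structural claim is this: conditioned on $\calA_i$, the law of $\bfx_{\ge i}$ is exactly that of \cref{fig:bms_basic} restarted at iteration $i$ from state $0$, and it is independent of $\bfx_{<i}$. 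This holds because, by the Markov chain of \cref{fig:bms_markov}, $\calA_i$ means the chain is in state $0$ at time $i$, and the future of a Markov chain given its present state is independent of its past.

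The second claim is that the coordinates $\bfy_i,\dots,\bfy_n$ depend only on $\bfx_{\ge i}$ and $\bfz_{\ge i}$ when $\calA_i$ holds. Since $\bfC$ is $b$-banded and lower triangular, $(\bfC\bfx)_k=\sum_{j=k-b+1}^{k}\bfC_{k,j}\bfx_j$. For $k\ge i$, any $j<i$ in this range satisfies $j\ge i-b+1$, and $\calA_i$ forces $\bfx_j=0$ for these $j$. This justifies the remark that $P_i(\cdot\mid\calA_i)$ and $Q_i$ are functions of $\bfy_{\ge i}$ only. It also shows $P(\bfy)=P_1(\bfy\mid\calA_1)$, since $\calA_1$ holds vacuously, which gives $f_1=P/Q$ once the recursion is established.

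For the recursion itself, split on $\bfx_i$, which given $\calA_i$ is $1$ with probability $p$.
\begin{itemize}
\item If $\bfx_i=0$, then $\calA_{i+1}$ holds and the conditional law of the future is the restarted chain from $i+1$. Also $\bfy_i=\bfz_i$, whose density cancels with the $i$th factor of $Q_i$. This leaves $(1-p)f_{i+1}(\bfy)$.
\item If $\bfx_i=1$, the min-sep constraint forces $\bfx_{i+1}=\dots=\bfx_{i+b-1}=0$, and $\calA_{i+b}$ holds. For $k\in[i,i+b-1]$ we have $(\bfC\bfx)_k=\bfC_{k,i}$, because the only other contributions come from columns $j\in(i,k]$, which are zero, or from $j<i$, excluded by $\calA_i$. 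So $(\bfy_i,\dots,\bfy_{i+b-1})\sim\calN(c_i,\sigma^2\bfI)$, truncated at $n$ (this is where the zero padding enters). For $k\ge i+b$, the window $[k-b+1,k]$ avoids $i$, so these coordinates depend only on $\bfx_{\ge i+b}$, which is the restarted chain from $i+b$ and is independent of the noise in the first block.
\end{itemize}
In the second case the joint density therefore factors into the Gaussian block times $P_{i+b}(\cdot\mid\calA_{i+b})$. Dividing by the factorized $Q_i$ gives $p\cdot\frac{\calN(c_i,\sigma^2\bfI)}{\calN(\boldzero,\sigma^2\bfI)}(\bfy_i,\dots,\bfy_{i+b-1})\,f_{i+b}(\bfy)$. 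Summing the two cases yields \cref{eq:mainrecursion}.

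The boundary case $i>n$ gives an empty product, so $f_i=1$. For $i+b-1>n$, the block is truncated and the padded zeros are irrelevant. Nonnegativity of $\bfC$ is not needed here; it matters only for the earlier reduction to this scalar mechanism.

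The main obstacle is the second claim together with the conditional-independence bookkeeping. One has to state carefully that, conditioned on $\calA_i$ (or on $\calA_{i+b}$ in the participation case), the earlier participations contribute nothing to later outputs, and that the future of the chain is a fresh copy. I would handle this by writing $P_i(\cdot\mid\calA_i)$ explicitly as a mixture over $\bfx_{\ge i}$ under the restarted chain and checking the band-support argument coordinatewise.
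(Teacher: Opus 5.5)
Your proposal is correct and follows essentially the same route as the paper. Like the paper, you use the banded structure to show that, given $\calA_i$, the output suffix $(\bfy_i,\ldots,\bfy_n)$ depends only on $\bfx_{\geq i}$ and $\bfz_{\geq i}$, split on the Bernoulli participation at iteration $i$, and divide by the two factorizations of $Q_i$. Your Markov-chain restart argument also makes explicit a step the paper leaves informal: conditioning on $\calA_i \cap \{\bfx_i = 0\}$ (resp.\ $\calA_i \cap \{\bfx_i = 1\}$) gives the same suffix law as conditioning on $\calA_{i+1}$ (resp.\ $\calA_{i+b}$).
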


\begin{proof}

We first show that $\bfy_i, \ldots, \bfy_n$ is independent of $\bfy_1, \ldots, \bfy_{i-1}$ given $\mathcal{A}_i$.  Under $Q$, this is immediate since $\bfy \sim \mathcal{N}(\mathbf{0}, \sigma^2 I)$, independent of $\mathcal{A}_i$. Under $P$, by the $b$-banded structure of $\bfC$, the segment $(\bfy_i, \ldots, \bfy_n)$ depends only on $(\bfx_{i-(b-1)}, \ldots, \bfx_n, \bfz_i, \ldots, \bfz_n)$, while $(\bfy_1, \ldots, \bfy_{i-1})$ depends only on $(\bfx_1, \ldots, \bfx_{i-1}, \bfz_1, \ldots, \bfz_{i-1})$. Conditioned on $\mathcal{A}_i$, the overlap $\bfx_{i-(b-1)}, \ldots, \bfx_{i-1}$ is fixed to zeros, and the remaining $\bfx$-segments are independent by construction (as entries separated by at least $b$ indices depend only through participation in the in-between iterations, which is already fixed). Since the $\bfz$'s are i.i.d.\ and independent of $\bfx$, it follows that the two $\bfy$-segments are conditionally independent given $\mathcal{A}_i$.

Next, to show the recursion, under $Q = \mathcal{N}(\mathbf{0}, \sigma^2 I)$, we have
\begin{align}
    Q_i(\bfy) &= \mathcal{N}(0, \sigma^2)(\bfy_i) \cdot Q_{i+1}(\bfy)  \label{eq:first_decomp_of_Q_i} \\
    &= \mathcal{N}(\mathbf{0}, \sigma^2 I)(\bfy_i, \ldots, \bfy_{i+(b-1)}) \cdot Q_{i+b}(\bfy)  \label{eq:second_decomp_of_Q_i}
\end{align}
where if the index goes beyond $t$ we treat the marginal at that portion as the degenerate distribution at the empty set. On the other hand, under $P$ and conditioned on $\mathcal{A}_i$, the example participates with probability $p$, independent of all other events. Decomposing on this Bernoulli choice and using the enforced $b$-min-sep constraint yields
\begin{align*}
    P_i(\bfy \mid \mathcal{A}_i) &= (1-p) \cdot \underbrace{\mathcal{N}(0, \sigma^2)(\bfy_i)}_{\text{no participation at $i$}} \cdot \; \underbrace{P_{i+1}(\bfy \mid \mathcal{A}_{i+1})}_{\text{available at next iteration}} \\
    &\phantom{{}={}} + p \cdot \underbrace{\mathcal{N}(\bfc_i, \sigma^2 I)(\bfy_i, \bfy_{i+1}, \ldots, \bfy_{i+(b-1)})}_{\text{participate at $i$ and forced non-participation for $b-1$ iterations}} \cdot \; \underbrace{P_{i+b}(\bfy \mid \mathcal{A}_{i+b})}_{\text{available $b$ iterations later}}.
\end{align*}
We then divide both sides by $Q_i(\bfy)$. In the first term, the factor $\mathcal{N}(0, \sigma^2)(\bfy_i)$ cancels with the corresponding
factor in \eqref{eq:first_decomp_of_Q_i}, and we recognize the residual ratio as
\[\frac{(1-p) \cdot \mathcal{N}(0, \sigma^2)(\bfy_i) \cdot P_{i+1}(\bfy \mid \mathcal{A}_{i+1})}{\mathcal{N}(0, \sigma^2)(\bfy_i) \cdot Q_{i+1}(\bfy)} = (1-p) \cdot f_{i+1}(\bfy).\]
In the second term, we divide by \eqref{eq:second_decomp_of_Q_i} to get
\[\frac{p \cdot \mathcal{N}(\bfc_i, \sigma^2 I)(\bfy_i, \bfy_{i+1}, \ldots, \bfy_{i+(b-1)}) \cdot P_{i+b}(\bfy \mid \mathcal{A}_{i+b})}{\mathcal{N}(\mathbf{0}, \sigma^2 I)(\bfy_i, \ldots, \bfy_{i+(b-1)}) \cdot Q_{i+b}(\bfy)} = p \cdot \frac{\mathcal{N}(\bfc_i, \sigma^2 I)}{\mathcal{N}(\mathbf{0}, \sigma^2 I)}(\bfy_i, \ldots, \bfy_{i+(b-1)}) \cdot f_{i+b}(\bfy).\]
Combining the two proves the recursive relationship. The boundary condition is $f_i(\bfy)=1$ for all $i>n$ since both $P$ and $Q$ place unit mass on the empty suffix. This proves the stated recursion.

\end{proof}

While we state \cref{thm:dp-correctness} for \cref{fig:bms_basic} for simplicity, as a corollary we can apply it to the warm-start version of $b$-min-sep subsampling in \cref{fig:bms_ws} and show that $\frac{f_1(\bfy) + p \cdot \sum_{i=2}^b f_i(\bfy)}{1 + (b-1)p} = P(\bfy)/Q(\bfy)$.

For Toeplitz $\bfC$, $\{f_1(\bfy), f_2(\bfy), \ldots f_n(\bfy)\}$ can be computed in $O(n \min\{b, \log n\})$ time. Namely, the computation has two main components: (i) computing $\mathcal{N}(\bfc_i, \sigma^2 I) / \mathcal{N}(\mathbf{0}, \sigma^2 I) (\bfy_i, \ldots, \bfy_{i+(b-1)})$ for all $i$, and (ii) evaluating the recursion given these terms. (i) takes $O(t)$ time because to evaluate the likelihood ratio, we need to compute $\|\bfc_i\|^2$ and $\langle \bfc_i, (\bfy_i, \ldots, \bfy_{i+(b-1)}) \rangle$ for all $i$. The former is straightforward to compute in $O(n)$ time because $C$ is Toeplitz, and the latter can be naively done in $O(nb)$ time but can also be computed in $O(n \log n)$ time using FFT, since these are ``sliding'' dot products. (ii) takes $O(n)$ time because we compute $f_i$ for $n$ values of $i$, and each computation takes $O(1)$ time. 

With the efficient computation of $P(\bfy)/Q(\bfy)$, we can apply the Monte Carlo privacy-loss accounting framework described in \cref{section:evr} to obtain $(\varepsilon, \delta)$-DP guarantees (and related primitives such as noise multiplier calibration) for BandMF with $b$-min-sep subsampling.

We have open-sourced an implementation of both sampling from $P$ (or $Q$) and efficiently computing $P/Q$ as part of the \texttt{jax\_privacy} library\footnote{\url{https://github.com/google-deepmind/jax_privacy/tree/main/jax_privacy/experimental/monte_carlo}}.

\section{Multi-Attribution Setting}\label{section:multi-attribution}

We next demonstrate the versatility of $b$-min-sep subsampling compared to other sampling schemes, by showing it can be applied to the \textit{multi-attribution setting}. In the multi-attribution setting, as defined in \cite{ganesh2025s}, rather than just receiving a set of examples, we also receive a set of users $U$ and for each example we are given a subset of users that example is attributed to. As in \cite{ganesh2025s}, we now say that two datasets $D, D' \in \calD^n$ sharing a set of users $U$ are adjacent under the \textit{fixed-graph adjacency} if they are the same except with all examples attributed to one user $u \in U$ replaced with $\bot$, a special element whose gradients are zero everywhere.

Standard ``user-to-example'' reductions, as e.g.\@ are commonly used in the federated learning setting, cannot be applied in the multi-attribution setting because examples are not necessarily attributed to a single user. So to apply DP-SGD to the multi-attribution setting, \cite{ganesh2025s} use a contribution bounding framework where they first preprocess the dataset to limit the number of examples attributed to each user to be at most $k$, then use bounds on the group privacy of DP-SGD (i.e., the adjacency allows any $k$ examples to change between adjacent datasets) with Poisson subsampling (as proven in e.g. \cite{charles25userleveldp}) to calibrate the noise multiplier.  However, due to the lack of group privacy bounds for BandMF with any form of amplification, \cite{ganesh2025s} has to resort to  BandMF without amplification, and so BandMF does not show the strong improvements over DP-SGD that it achieves in the example-level DP setting.

Before showing how $b$-min sep subsampling can resolve this gap, we first explain why neither of the past works on amplification for BandMF can be readily applied to the multi-attribution setting. 

\mypar{Cyclic Poisson} Recall that cyclic Poisson subsampling of \cite{choquette2024amplified} partitions the dataset into $b$ subsets $D_1, \ldots, D_b$, and only samples from each subset every $b$ iterations. To extend their privacy analysis (which effectively can ignore all subsets except the one the sensitive example is in) to the multi-attribution setting we would need to ensure that each user's examples only appear in one of the $b$ subsets $D_i$. In other words, we'd need to partition the users into $b$ subsets $U_1, \ldots, U_b$, and then form data subsets $D_i$ such that the examples in $D_i$ only are attributed to users in $U_i$. We also want the $D_i$ to be roughly balanced in size; e.g. one trivial partition is to put all users in $U_1$, which allows us to place all examples in $D_1$, however this means $D_2, \ldots, D_b$ would be empty and hence we would have empty batches in most iterations. In general satisfying these constraints on the $D_i$ may discard a large portion of examples. For example, consider when for each pair of users, there is one example attributed to that pair. Then with $|U|$ users there are $\approx |U|^2$ examples, but to achieve a balanced partition we would need to have $b$ subsets of $|U|/b$ users, so each data subset only has $\approx |U|^2/b^2$ eligible examples and hence we would only retain fraction $\approx 1 / b$ of the examples. Hence cyclic Poisson subsampling is not broadly applicable to the multi-attribution setting without a large loss in dataset size, even disregarding the algorithmic difficulty of finding a good partition.

\mypar{Balls-in-bins} Unlike cyclic Poisson, balls-in-bins subsampling allows each example to potentially participate in every iteration. One might then hope group privacy properties can be proven for BandMF using balls-in-bins subsampling, which would allow one to re-use the framework of \cite{ganesh2025s}. However, the privacy analysis of balls-in-bins subsampling uses Monte Carlo accounting, where we evaluate $P(\bfy)$ by summing over all participation patterns. For a single example, since it participates in one of $T$ positions in the length-$T$ epoch, the number of participation patterns is just $T$ and hence this sum is tractable. However, to prove group privacy guarantees we would have to consider all joint participation patterns for a set of $k$ examples, i.e. the number of ways to throw $k$ balls into $T$ bins of which there are $\Omega(T^k)$. In other words, the runtime of even computing a single sample grows exponentially in $k$, so balls-in-bins does not permit an efficient group privacy analysis\footnote{We note that while warm-start $b$-min-sep subsampling generalizes balls-in-bins subsampling in the example-level DP setting, as we will discuss later we were unable to exactly extend our warm-start version of $b$-min-sep subsampling to the multi-attribution setting, hence the observation that $b$-min-sep subsampling generalizes balls-in-bins subsampling in the example-level DP setting does not carry over to the multi-attribution setting.}.

\subsection{Multi-attribution $b$-min-sep subsampling}

Now, we present the extension of $b$-min-sep subsampling to the multi-attribution setting as \cref{fig:bms_multiattr}. In this version, in each iteration, every example in the dataset is independently sampled with a fixed probability $p$, and the sampled examples are placed into a temporary holding set for that iteration. Being sampled at this stage does not yet mean that the example will be used in the current iteration. This sampling step is performed regardless of whether the example was selected or excluded in previous iterations. An example is excluded from the current batch if, in any of the previous $b-1$ iterations, there was a sampled example that shares at least one user with it. Importantly, this exclusion rule depends only on whether examples were sampled in those earlier iterations, not on whether they were ultimately included in earlier batches, and we will explain this choice in the analysis. After removing all such excluded examples, the remaining sampled examples form the batch for the current iteration. From the perspective of one user, their examples only participate once in any consecutive window of $b$ iterations, though the one participation potentially includes more than one of their examples.

\begin{figure}[H]
\begin{algorithm}[H]
\caption{Multi-attribution $b$-min-sep subsampling}
\label{fig:bms_multiattr}
\textbf{Parameters:} Sampling probability $p$, minimum separation parameter $b$, dataset $D=\{e_1,e_2,\ldots,e_m\}$, total iterations $n$, user-sets $\{U(e)\}_{e\in D}$ where $U(e)$ denotes the set of users attributed to example $e$.
\begin{algorithmic}[1]
\State \textbf{(User-sharing neighborhood)} For any example $e\in D$, define
\[
N(e) \;=\; \{\, e' \in D \mid U(e') \cap U(e) \neq \emptyset \,\}.
\]
\For{$i \in [n]$}
\State $S_i \leftarrow$ include each element of $D$ independently with probability $p$.
\State $D^{exclude}_i \leftarrow \bigcup_{j=\max\{1, \: i-b+1\}}^{i-1}\ \ \bigcup_{e \in S_j} N(e)$
\State $B_i \leftarrow S_i \setminus D^{exclude}_i$
\EndFor
\State \Return batches $B_1, B_2, \ldots B_n$.
\end{algorithmic}
\end{algorithm}
\end{figure}

Similar to Algorithm~\ref{fig:bms_ws}, we may want to apply a warm start to Algorithm~\ref{fig:bms_multiattr} to ensure a similar batch size for the earlier iterations. The problem is that with the intertwined user-example graph, it is hard to analytically compute the stationary distribution to start with (unlike the prior case where examples participated independently of each other), so instead we can choose to run the subsampling algorithm (but no training) for many iterations, then start training when the batch sizes become stable.

\subsection{Privacy analysis of Algorithm~\ref{fig:bms_multiattr}}

We similarly apply Monte Carlo accounting to analyze the privacy of Algorithm~\ref{fig:bms_multiattr}, which means we need an efficient way to evaluate $P(\bfy) / Q(\bfy)$, where $P$ is the distribution of the output if a user is included and $Q$ if every example belonging to the user is replaced by the zero-gradient $\bot$.

Let $D_u = \{e \in D: u \in U(e)\}$ denote the subset of examples belonging to a particular user $u$ of interest, and $\mathcal{A}_i$ denote the event that every example in $D_u$ is available to participate at the $i$th iteration, i.e.,
\[\mathcal{A}_i \coloneqq \left\{ e \notin D^{exclude}_i, \quad \forall e \in \bigcup_{e' \in D_u} N(e') \right\}.\]

We similarly define $P_i, Q_i$ to be the marginals of $P, Q$ starting from iteration $i$, and
\[f_i(\bfy) \coloneqq \frac{P_i (\bfy \mid \mathcal{A}_i)}{Q_i(\bfy \mid \mathcal{A}_i)} = \frac{P_i (\bfy \mid \mathcal{A}_i)}{Q_i(\bfy)},\]
that is, the likelihood ratio of the remaining process starting from iteration $i$, given that every example belonging to the user is not barred from participation at iteration $i$. Specifically, if we pre-process the dataset such that each user has at most $k_u$ examples just like in \cite{ganesh2025s} (where $k_u$ is a constant), we will prove that $f_1(\bfy)$ is the likelihood ratio of the mechanism's output for the \textit{worst case} dataset, and that $f_1$ is computable by following the recursion
\begin{equation*}
    f_i(\bfy) = (1-p)^{k_u} f_{i+1}(\bfy) + \sum_{j=1}^{k_u} \mathbb{P} \big( \mathrm{Binom}(k_u, p) = j \big) \cdot \frac{\mathcal{N}(j \cdot \bfc_i, \sigma^2 I)}{\mathcal{N}(\mathbf{0}, \sigma^2 I)}(\bfy_i, \ldots, \bfy_{i+(b-1)}) \cdot f_{i+b}(\bfy),
\end{equation*}
with the boundary condition $f_i = 1$ for all $i > n$, where $\bfc_i \in \mathbb{R}^b$ is the vector formed by taking the nonzero entries from the $i$th column of $\bfC$ (again with right zero-padding if it is one of the last $b-1$ columns). The time to compute the dynamic program increases multiplicatively by a factor of $O(k_u)$ to $O(n k_u \min(b, \log n))$ due to the need to compute $k_u$ times as many normal likelihood ratios and sum over them; this is a minor blowup in contrast with the exponential blowup that the accounting for balls-in-bins would incur.

\begin{theorem}
Let $\bfx$ denote the random vector where $\bfx_i = 0$ if any of $\bfx_{i-1}, \bfx_{i-2}, \ldots \bfx_{i-b+1} > 0$, and otherwise $\bfx_i \sim \mathrm{Binom}(k_u, p)$. For a fixed $b$-banded $\bfC$ and $\bfz \sim \calN(\boldzero, \sigma^2\mathbb{I})$ let $P = \bfC \bfx + \bfz$ and $Q = \bfz$. Then:

\begin{enumerate}
    \item for $b$-BandMF on any two adjacent datasets $D, D'$ such that each user has at most $k_u$ examples in $D$, using \cref{fig:bms_multiattr} to form batches satisfies any DP guarantee satisfied by $P, Q$.
    \item $f_1(\bfy) = P(\bfy) / Q(\bfy)$.
\end{enumerate}
\end{theorem}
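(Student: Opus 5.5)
Part 2 follows the template of \cref{thm:dp-correctness}, and I would do it first. The process $\bfx$ in the statement is a Markov chain with the same availability structure as in \cref{fig:bms_markov}. The only change is that when the example is available, it participates with multiplicity $j \sim \mathrm{Binom}(k_u,p)$: a zero multiplicity keeps it available, and $j\ge 1$ forces $b-1$ zeros afterwards.

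Conditioned on $\mathcal{A}_i$ (now meaning $\bfx_{i-b+1},\dots,\bfx_{i-1}$ are all zero), bandedness of $\bfC$ makes $(\bfy_i,\dots,\bfy_n)$ conditionally independent of the earlier outputs. The argument is the same as before. Decomposing on the value of $\bfx_i$ gives
\[P_i(\bfy\mid\mathcal{A}_i) = (1-p)^{k_u}\,\calN(0,\sigma^2)(\bfy_i)\,P_{i+1}(\bfy\mid\mathcal{A}_{i+1}) + \sum_{j=1}^{k_u}\Pr[\mathrm{Binom}(k_u,p)=j]\,\calN(j\bfc_i,\sigma^2 I)(\bfy_i,\dots,\bfy_{i+b-1})\,P_{i+b}(\bfy\mid\mathcal{A}_{i+b}).\]
Here I use that $\bfC\bfx$ restricted to rows $i,\dots,i+b-1$ equals $j\bfc_i$ when the next $b-1$ entries vanish. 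Dividing by the two factorizations \eqref{eq:first_decomp_of_Q_i} and \eqref{eq:second_decomp_of_Q_i} of $Q_i$ yields the recursion, with $f_i=1$ for $i>n$.

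For part 1, I would reduce the real mechanism to the pair $(P,Q)$ in three steps.

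First, as in Lemma 4.5 of \cite{choquette2023privacy}, the worst case has all other examples contributing zero gradients and one-dimensional non-adaptive gradients. The output is then $\bfC\bfx' + \bfz$, where $\bfx'_i=|B_i\cap D_u|$ is the number of user $u$'s examples in batch $i$. By post-processing, I may also release the sets $S_j\setminus D_u$; these are independent of everything concerning $D_u$.

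Second, I describe $\bfx'$ conditioned on the $S_j$ for $e\notin D_u$. An example $e\in D_u$ is excluded at $i$ if some sampled example in the last $b-1$ iterations shares a user with it. The key design choice is that exclusion depends on $S_j$ and not on $B_j$. As a result, other examples' samples cause extra exclusions that only depend on released public randomness, and any example of $D_u$ sampled at $i$ (whether or not it enters $B_i$) excludes all of $D_u$, since every example shares $u$, for the next $b-1$ steps. So $\bfx'_i$ is either forced to $0$, or equals $\mathrm{Binom}(|D_u|,p)$ restricted to available examples. Equivalently, $\bfx'$ is obtained from a process of the form in the statement, with $|D_u|\le k_u$ and certain coordinates externally zeroed, plus blocking caused by samples that were not included.

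Third, I show that this zeroing and blocking, together with $|D_u|<k_u$, can only help privacy. The cleanest route is a coupling. Let $\bfx$ be the process with all $k_u$ examples and no external exclusions. Then $\bfx'$ is obtained by a randomized post-processing of the selection, namely zeroing some coordinates and thinning some multiplicities. However, thinning changes the blocking pattern, so this is not literally a post-processing of $\bfC\bfx+\bfz$. I therefore expect this step to be the main obstacle.

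My plan for it is to argue dominance per fixed external exclusion pattern $E$. For each $E$, the pair $(P_E, Q)$ is a mixture of Gaussians whose means are $\bfC$ applied to min-separated vectors. I would then show that the hockey-stick divergence is monotone under removing available slots and reducing binomial counts. This would be done via joint convexity of the hockey-stick divergence, together with the fact that a smaller count $j'\le j$ mixture is dominated; here I would use non-negativity of $\bfC$ and a one-dimensional monotone likelihood ratio argument along the direction $\bfc_i$. Finally, averaging over $E$ by joint convexity gives the claim.

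If that monotonicity proves delicate, a fallback is to modify the accounting so that it treats $\bfx'$ as $\bfx$ with adversarially chosen zeroing, i.e. it takes a supremum over $E$. I would then check that $E=\emptyset$ and $|D_u|=k_u$ attains the supremum.
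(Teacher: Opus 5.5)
\textbf{There is a genuine gap.} Your Part 2 matches the paper's argument, and your first two steps of Part 1 are fine. But Step 3 is the entire content of Part 1, and you leave it as a plan that, as sketched, would not deliver the needed monotonicity.

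\begin{itemize}
\item Joint convexity is fine for averaging over the released external pattern $E$, but it cannot compare $P_E$ with $P_\emptyset$.
\item A one-dimensional monotone-likelihood-ratio argument along a single direction $\bfc_i$ does not control a mixture whose components have means $\bfC\bfx$ pointing in many different directions.
\item Your fallback (take a supremum over $E$ and check that $E=\emptyset$ attains it) just restates what must be proved.
\end{itemize}

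\textbf{The missing idea} is a coupling that makes the participation vectors coordinatewise comparable, combined with Lemma 4.3 of \cite{choquette2023privacy}. That lemma says: if $\bfa,\bfb\ge 0$ admit a coupling with $\bfa\le\bfb$ almost surely, then $(\bfa+\bfz,\bfz)$ satisfies every guarantee of $(\bfb+\bfz,\bfz)$. Since $\bfC\ge 0$, it applies to $\bfa=\bfC\bfx'$ and $\bfb=\bfC\bfx$ whenever $\bfx'\le\bfx$.

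The obstacle you anticipate (``thinning changes the blocking pattern'') does not arise, precisely because exclusion is driven by $S_j$ rather than $B_j$. Share user $u$'s coins, taking $|D_u|=k_u$ as the paper does without loss of generality.
\begin{itemize}
\item Realize the target process by $\bfx_i=|S_i\cap D_u|$ at its unblocked steps. This has the $\mathrm{Binom}(k_u,p)$ law because $S_i$ is fresh.
\item If the target is blocked at $i$, then $\bfx_j>0$ for some $j\in[i-b+1,i-1]$. So some $e'\in S_j\cap D_u$ exists, and in the real run $D_u\subseteq N(e')\subseteq D^{exclude}_i$, whether or not $e'$ actually entered $B_j$. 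Hence $\bfx'_i=0$.
\item If the target is unblocked at $i$, then $\bfx'_i=|(S_i\setminus D^{exclude}_i)\cap D_u|\le \bfx_i$.
\end{itemize}
So thinning never removes a block. External samples, and samples drawn during blocked windows, only add blocks. Therefore $0\le\bfx'\le\bfx$ almost surely, and Lemma 4.3 finishes the argument without conditioning on $E$ and without any post-processing representation.

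\textbf{How the paper organizes it.} The paper splits this into two stages.
\begin{enumerate}
\item It couples the run on $D$ with the run on $D_u$ alone by sharing $S_i\cap D_u$. Exclusions in the $D$-run are then supersets of those in the $D_u$-run, so all external blocking is disposed of via Lemma 4.3.
\item For $D_u$ alone, it notes that re-availability after a participation at $i$ is delayed to a random $i'\ge i+b$. Each delayed version is a post-processing (truncate, then pad with Gaussian noise) of the $i'=i+b$ version. Convexity then shows the undelayed target process is the worst case.
\end{enumerate}
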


\begin{proof}
To prove the first part of the theorem, for any dataset $D$ and user $u$, by Lemma 4.5 of \cite{choquette2023privacy} it suffices to consider the case where $P = \bfC \bfx + \bfz$ and $Q = \bfz$ where $\bfx$ is the vector such that $\bfx_i$ is the number of examples attributed to $u$ in the batch $B_i$ formed by running \cref{fig:bms_multiattr} on $D$. Without loss of generality we assume the user has exactly $k_u$ examples (having fewer examples only improves the privacy guarantee). We refer to this $\bfx$ as being induced by $D$ and $u$.

To show it suffices to consider the single-user setting, Lemma 4.3 of \cite{choquette2023privacy} shows that given two randomized vectors $\bfa, \bfb$ and Gaussian $\bfz$, then $(\bfa + \bfz, \bfz)$ satisfy any indistinguishability guarantee satisfied by $(\bfb + \bfz, \bfz)$ if $\bfa, \bfb \geq \boldzero$ and there is a coupling of $\bfa, \bfb$ such that $\bfa \leq \bfb$ w.p. 1. As we assume $\bfC$ is non-negative, for non-negative $\bfx$ and $\bfx'$, $\bfx \leq \bfx'$ implies $\bfC \bfx \leq \bfC \bfx'$. So for any dataset $D$ and user $u$, to reduce to the single-user setting it suffices to show a coupling between $\bfx$ induced by this choice of $D, u$, and $\bfx'$ induced by a different $D', u$ such that $D'$ only has $k_u$ examples all attributed to user $u$, such that $\bfx \leq \bfx'$ w.p. 1.

In particular, for the latter dataset we use the subset $D_u$ of $D$ containing only the $k_u$ examples belonging to $u$. Consider running \cref{fig:bms_multiattr} on $D$ and $D_u$, where we couple the two processes by enforcing that $S_i \cap D_u$ is the same across both runs of \cref{fig:bms_multiattr}. Under this coupling, by definition of $D^{exclude}_i$ we have that $D^{exclude}_i \cap D_u$ when running on $D$ is a superset of $D^{exclude}_i \cap D_u$ when running on $D_u$, hence $B_i \cap D_u = (S_i \setminus D^{exclude}_i) \cap D_u$ when running on $D$ is a subset of $B_i \cap D_u$ when running on $D_u$. This gives $\bfx \leq \bfx'$ for $\bfx$ induced by $D, u$ and $\bfx'$ induced by $D_u, u$ w.p. 1.

To finish proving the first part of the theorem, we need to show that in the single-user setting, forming $\bfx$ by iteratively setting $\bfx_i = 0$ if any of $\bfx_{i-1}, \bfx_{i-2}, \ldots \bfx_{i-b+1} > 0$, and otherwise $\bfx_i \sim \mathrm{Binom}(k_u, p)$ gives a matrix mechanism $\bfC \bfx + \bfz$ dominating the matrix mechanism using $\bfx$ induced by $D_u, u$. Intuitively, this says for $D_u$ if we had used $B_i$ instead of $S_i$ as the sets for exclusion in \cref{fig:bms_multiattr}, the privacy can only get worse. To show this, suppose at least one of the $k_u$ examples in $D_u$ participates in $B_i$ under \cref{fig:bms_multiattr}. Let $i'$ be the smallest value $\geq i$ such that $\calA_{i'}$ holds. $i'$ is a random variable with support $\{i+b, i+b+1, \ldots n\}$. In turn, iterations $i$ to $n$ effectively sample this random variable $i'$, and then recursively run the mechanism for $n - i' + 1$ iterations and pre-pad it with $i' - i + b$ iterations of Gaussian noise. So for any fixed choice of $i' \geq i+b$, the output in iterations $i$ to $n$ is a post-processing of the mechanism of $i' = i+b$ (since we can truncate and pre-pad the output of the mechanism for $i' = i+b$ to simulate these iterations), hence by the convexity property of DP assuming $i' = i + b$ w.p. 1 only makes the privacy analysis worse. Assuming $i' = i + b$ always is equivalent to sampling $\bfx$ as in the theorem statement, so this shows $\bfC \bfx + \bfz$ using $\bfx$ in the theorem statement dominates the mechanism using $\bfx$ induced by $D_u, u$ according to the theorem statement as desired.

The proof of the second part of the theorem is very similar to that of \cref{thm:dp-correctness}. Abusing notation, redefine $\calA_i$ to be the event where $\bfx_{i-1}, \bfx_{i-2}, \ldots \bfx_{i-b+1} = 0$ under the distribution of $\bfx$ given in the theorem statement. Just like before, under $Q = \mathcal{N}(\mathbf{0}, \sigma^2 \bfI)$, we have
\begin{align}
    Q_i(\bfy) &= \mathcal{N}(0, \sigma^2)(\bfy_i) \cdot Q_{i+1}(\bfy)  \label{eq:first_decomp_of_Q_i_multiattr} \\
    &= \mathcal{N}(\mathbf{0}, \sigma^2 \bfI)(\bfy_i, \ldots, \bfy_{i+(b-1)}) \cdot Q_{i+b}(\bfy)  \label{eq:second_decomp_of_Q_i_multiattr}
\end{align}
where if the index goes beyond $t$ we treat the marginal at that portion as the degenerate distribution at the empty set. On the other hand,
\begin{align*}
    P_i(\bfy | \calA_i) &= \underbrace{(1-p)^{k_u}}_{\mathbb{P}(\text{sample no examples})} \cdot \underbrace{\mathcal{N}(\mathbf{0}, \sigma^2)(\bfy_i)}_{\text{$\bfx_i = 0$}} \cdot \; \underbrace{P_{i+1}(\bfy \mid \mathcal{A}_{i+1})}_{\calA_{i+1} \text{ holds}} \\
    &\phantom{{}={}} + \sum_{j=1}^{k_u} \underbrace{\mathbb{P} \big( \mathrm{Binom}(k_u, p) = j \big)}_{\mathbb{P}(\text{$\bfx_i = j$})} \cdot \underbrace{\mathcal{N} ( j\bfc_i, \sigma^2 \bfI) (\bfy_i, \ldots, \bfy_{i+(b-1)})}_{\text{Product of $\bfx_i = j$ and $\bfc_i$}} \cdot \; \underbrace{P_{i+b}(\bfy \mid \mathcal{A}_{i+b})}_{\text{$\bfx_i = 0$ for next $b-1$ iterations}}.
\end{align*}
Divide both sides by $Q_i(\bfy)$, the first term on RHS by \eqref{eq:first_decomp_of_Q_i_multiattr} and the second term by \eqref{eq:second_decomp_of_Q_i_multiattr}, and we recover the proposed recursion.

\end{proof}

In the warm start case, since we instead simulate a warm start, we do not have the starting distribution analytically to do a tight analysis like before. However, the privacy guarantees of the cold start mechanism dominate those of the warm start mechanism (and thus can be used for accounting of the warm start mechanism), since the warm start mechanism is distributionally equivalent to a (randomized) post-processing of the cold start mechanism that truncates the last $\min \{i : \mathcal{A}_i\} - 1$ coordinates.

\subsection{A conjectured better algorithm}

In the Algorithm~\ref{fig:bms_multiattr}, we made the design choice to rule out availability of an example based on the coin toss results of its user-sharing examples (in the previous $b-1$ iterations), instead of based on actual participation. In fact, if we use the latter, the algorithm would still satisfy user-level $b$-min-sep requirement, and it appears that it would lead to a clean analysis too. To put it formally, consider Algorithm~\ref{fig:bms_multiattr_participation}, with a difference only in a subscript when forming $D^{exclude}$.

\begin{figure}[H]
\begin{algorithm}[H]
\caption{Improved multi-attribution $b$-min-sep subsampling}
\label{fig:bms_multiattr_participation}
\textbf{Parameters:} Sampling probability $p$, minimum separation parameter $b$, dataset $D=\{e_1,e_2,\ldots,e_m\}$, total iterations $n$, user-sets $\{U(e)\}_{e\in D}$ where $U(e)$ denotes the set of users attributed to example $e$.
\begin{algorithmic}[1]
\State \textbf{(User-sharing neighborhood)} For any example $e\in D$, define
\[
N(e) \;=\; \{\, e' \in D \mid U(e') \cap U(e) \neq \emptyset \,\}.
\]
\For{$i \in [n]$}
\State $S_i \leftarrow$ include each element of $D$ independently with probability $p$.
\State $D^{exclude}_i \leftarrow \bigcup_{j=\max\{1, \: i-b+1\}}^{i-1}\ \ \bigcup_{e \in \blue{B_j}} N(e)$
\State $B_i \leftarrow S_i \setminus D^{exclude}_i$
\EndFor
\State \Return batches $B_1, B_2, \ldots B_n$.
\end{algorithmic}
\end{algorithm}
\end{figure}

However, our privacy analysis for Algorithm~\ref{fig:bms_multiattr} relies on a key monotonicity property that enables a reduction to a single-user dataset. Namely, after applying the ``aligned-gradients'' reduction of Lemma 4.5 of \cite{choquette2023privacy}, the mean shift in iteration $i$ is proportional to the number of examples from the sensitive user $u$ that participate in that iteration. Under Algorithm~\ref{fig:bms_multiattr}, we coupled the sampling indicators $S_i$ across two runs—one on the full dataset $D$ and one on the restricted dataset $D_u$—so that the interim sampled sets $S_i\cap D_u$ are identical in the two runs. Since Algorithm~\ref{fig:bms_multiattr} forms its exclusions using only the interim sampled sets, any examples outside $D_u$ can only create additional exclusions for $D_u$ (external blocking). Consequently, in every iteration $i$, the number of participating examples from $D_u$ in the full run is at most the number in the restricted run, and hence the associated participation vector is element-wise dominated. In turn, because $\bfC$ is $b$-banded and the participation vectors satisfy the $b$-min-sep property, this yields an element-wise domination of the corresponding Gaussian means in the mixture representation of the mechanism, which is sufficient for the $(\varepsilon,\delta)$-DP domination we use in the proof.

For the participation-based variant (Algorithm~\ref{fig:bms_multiattr_participation}), the same coupling argument breaks down. Although examples outside $D_u$ can still only remove examples from $B_i$ via external blocking, this removal also changes the future exclusion sets because exclusions are now formed from the realized batches $\{B_j\}$ rather than the fixed interim samples $\{S_j\}$. In particular, external blocking can prevent certain examples from appearing in $B_j$, which may reduce future exclusions and thereby allow additional participations of examples in $D_u$ in later iterations. Thus, even under a coupling that shares the interim samples $S_j$, the participation counts for user $u$ need not be element-wise dominated: the full run may have fewer participations early on but more participations later, yielding participation vectors $\bfx,\bfx'$ such that neither $\bfx \ge \bfx'$ nor $\bfx' \ge \bfx$ holds element-wise. Since the aligned-gradients reduction makes the mean shift proportional to these per-iteration participation counts, the resulting Gaussian mean vectors $\bfC \bfx$ and $\bfC \bfx'$ likewise need not be comparable element-wise. Therefore, the sufficient coordinate-wise mean-domination condition (Lemma 4.3 in \cite{choquette2023privacy}) that underpins our reduction proof does not apply to Algorithm~\ref{fig:bms_multiattr_participation}.

Moreover, being unable to prove mean-domination is not merely an artifact of the proof technique. We have identified concrete instances in which, under a coupling of the sampling indicators, the Gaussian mean vector induced by running Algorithm~\ref{fig:bms_multiattr_participation} on the full dataset is not coordinate-wise dominated by the corresponding mean vector obtained by restricting to the single-user subset $D_u$. In these examples, external blocking suppresses early participations but enables later participations that would otherwise be excluded, leading to non-monotone shifts in the mean sequence.

We emphasize that failure of coordinate-wise mean domination does \textbf{not} imply that the
participation-based variant is not dominated in $(\varepsilon,\delta)$-DP; coordinate-wise mean
domination is a sufficient but not necessary condition for such a domination. We conjecture that
the same single-user reduction remains valid for Algorithm~\ref{fig:bms_multiattr_participation},
but we leave establishing this formally as an open problem.

\begin{conjecture}[Validity of Monte Carlo recursion for participation-based blocking]
\label{conj:participation_based_recursion}
Consider the participation-based multi-attribution $b$-min-sep subsampling scheme
(Algorithm~\ref{fig:bms_multiattr_participation}), and fix a user $u$ with at most $k_u$ examples
after preprocessing. Let $P$ and $Q$ denote the output distributions of the banded Gaussian matrix
mechanism under fixed-graph adjacency, where $P$ corresponds to the original dataset and $Q$
corresponds to replacing all examples attributed to user $u$ with $\bot$.

Then the Monte Carlo privacy analysis developed for Algorithm~\ref{fig:bms_multiattr} remains valid
as an upper bound for Algorithm~\ref{fig:bms_multiattr_participation}.
\end{conjecture}

\section{Empirical results}\label{section:experiments}

\subsection{MSE comparisons}

We first compare the MSE of prefix sums achieved by $b$-min-sep sampling to the MSE achieved by cyclic Poisson and balls-in-bins across a grid of parameter settings. The MSE of prefix sums is defined by e.g. \cite{denisov2022improved} as
\[\frac{1}{n} \sum_{i=1}^n \Var{\sum_{j \leq i} (\bfC^{-1}\bfz)_j} = \frac{1}{n}\|\bfA \bfC^{-1}\|_F^2 \sigma(\bfC)^2,\] where $\bfA$ is the all-ones lower triangular matrix, $\bfC$ is the matrix used in DP-MF, and $\sigma(\bfC)$ is the noise necessary for DP under the given $\bfC$ and sampling scheme. Just as reducing the noise multiplier of vanilla DP-SGD is a model/task-agnostic improvement to DP training, \cite{denisov2022improved, choquette2022multi} demonstrate that reducing MSE will generally lead to better training performance, i.e. these comparisons are predictive of the relative performance of the different methods in a wide variety of training settings.

We fix a number of iterations $n = 1024$, and $\delta = 10^{-3}$. Since the runtime and memory of BandMF increases with the number of bands \cite{choquette2024amplified, mckenna2025scaling} and hence large numbers of bands are typically impractical, we consider a compute-constrained setting where we restrict to at most 32 bands used during training. We vary (1) the number of bands of $\bfC$ in $\{2, 4, 8, 16, 32\}$, for each using $\bfC$ that minimizes MSE without amplification as in \cite{choquette2024amplified} (2) the ``epoch length,'' i.e. ratio of dataset size to average batch size $|D| / B$ in $\{128, 256, 512, 1024\}$, (3) $\epsilon$ in $\{0.5, 1, 2, 4, 8, 16\}$. For $b$-min-sep subsampling, we additionally vary the min-sep parameter $b$ in $\{2, 4, 8, \ldots, 1024\}$, restricted to $b \leq |D| / B$ and $b$ is at least the number of bands in $\bfC$. In Monte Carlo accounting, we estimate $\epsilon$ at $\delta / 2$, and use \cref{fig:evr_nm} in \cref{section:evr} to calibrate the noise multiplier.

In \cref{fig:msevscyclicconstrained} and \cref{fig:msevsbibconstrained} we plot the ratios of MSEs of achieved by $b$-min-sep subsampling and the other methods (less than 1 means an improvement), using the best choice of bands for each method in each setting. Since $|D|/B$-min-sep subsampling retrieves balls-in-bins, the ratios in \cref{fig:msevsbibconstrained} are never greater than 1.  In this setting $b$-min-sep offers meaningful improvements over cyclic Poisson except when both $\epsilon$ and epoch length are low. For balls-in-bins we see that $b$-min-sep always improves on balls-in-bins except when the epoch length is 1024, i.e. we only use 1 epoch. Since balls-in-bins reuses a fixed participation pattern across epochs, the $i$th iteration of any epoch is highly correlated with the $i$th iteration of any other epoch, which greatly weakens privacy amplification, making its privacy amplification weaker in the multi-epoch setting, but not in the single-epoch setting.

In short, our MSE comparisons suggest that in single-epoch settings one can default to balls-in-bins as the best method, and when both $\epsilon$ and $|D|/B$ are small cyclic Poisson may be preferable, but otherwise $b$-min-sep is the best-performing method.

\begin{figure}
    \centering
    \begin{minipage}{0.45\textwidth}
        \centering
        \includegraphics[width=\textwidth]{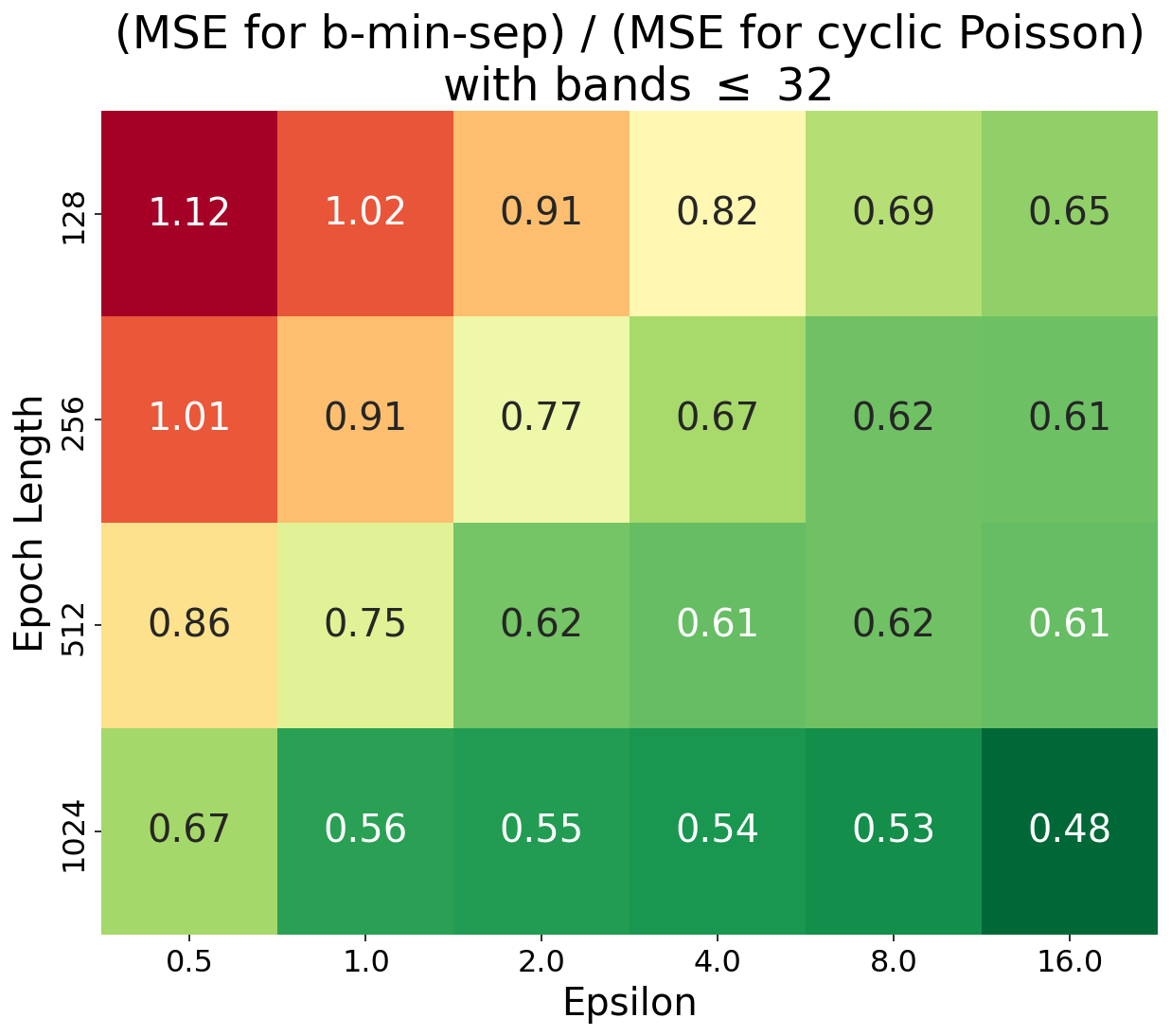} 
        \caption{Comparison to cyclic Poisson.}
    \label{fig:msevscyclicconstrained}
    \end{minipage}\hfill
    \begin{minipage}{0.45\textwidth}
        \centering
        \includegraphics[width=\textwidth]{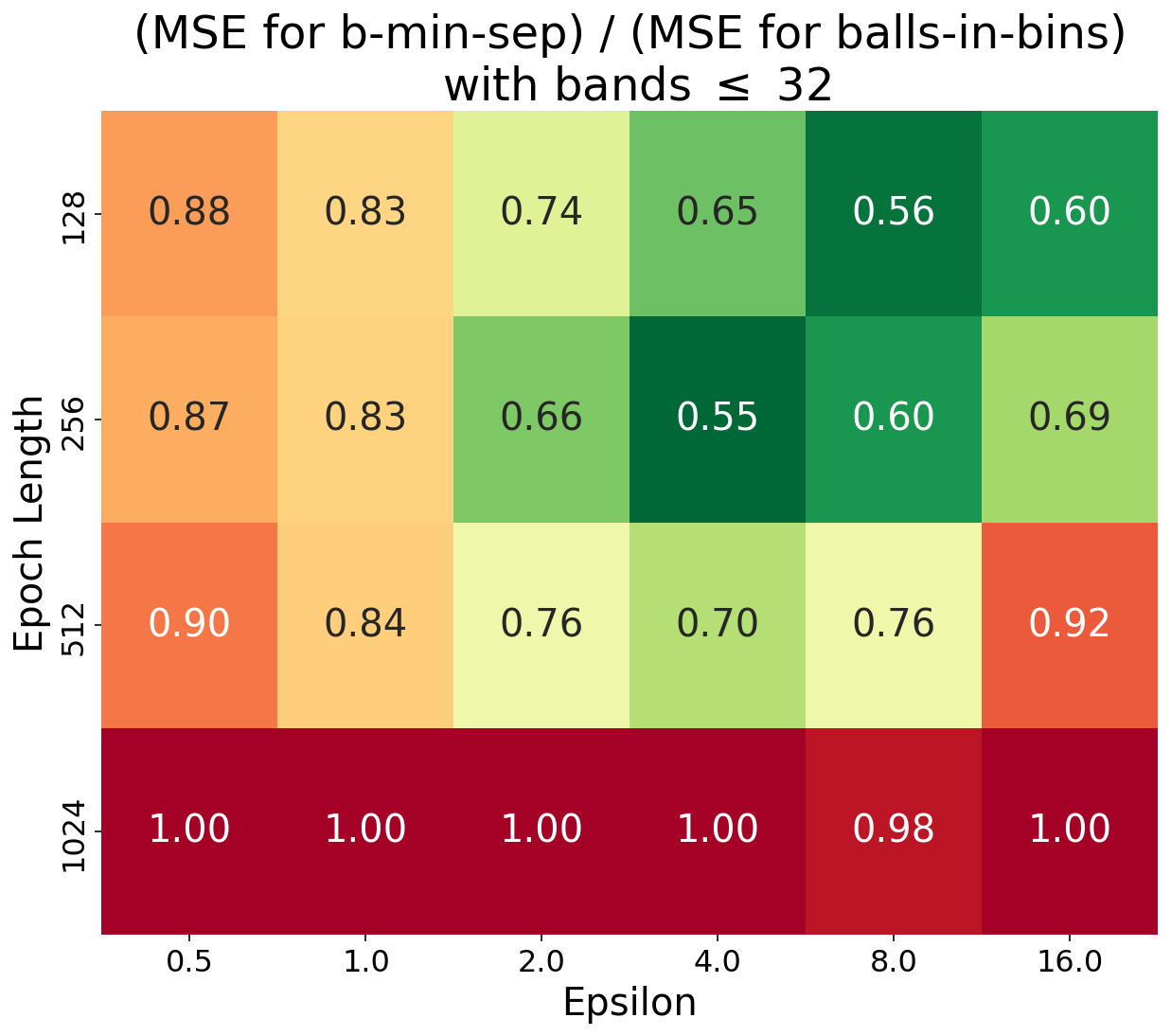} 
        \caption{Comparison to balls-in-bins.}
    \label{fig:msevsbibconstrained}
    \end{minipage}
\end{figure}

\subsection{Experiments on CINIC10}

\begin{figure}[t]
    \centering
    \includegraphics[width=\linewidth]{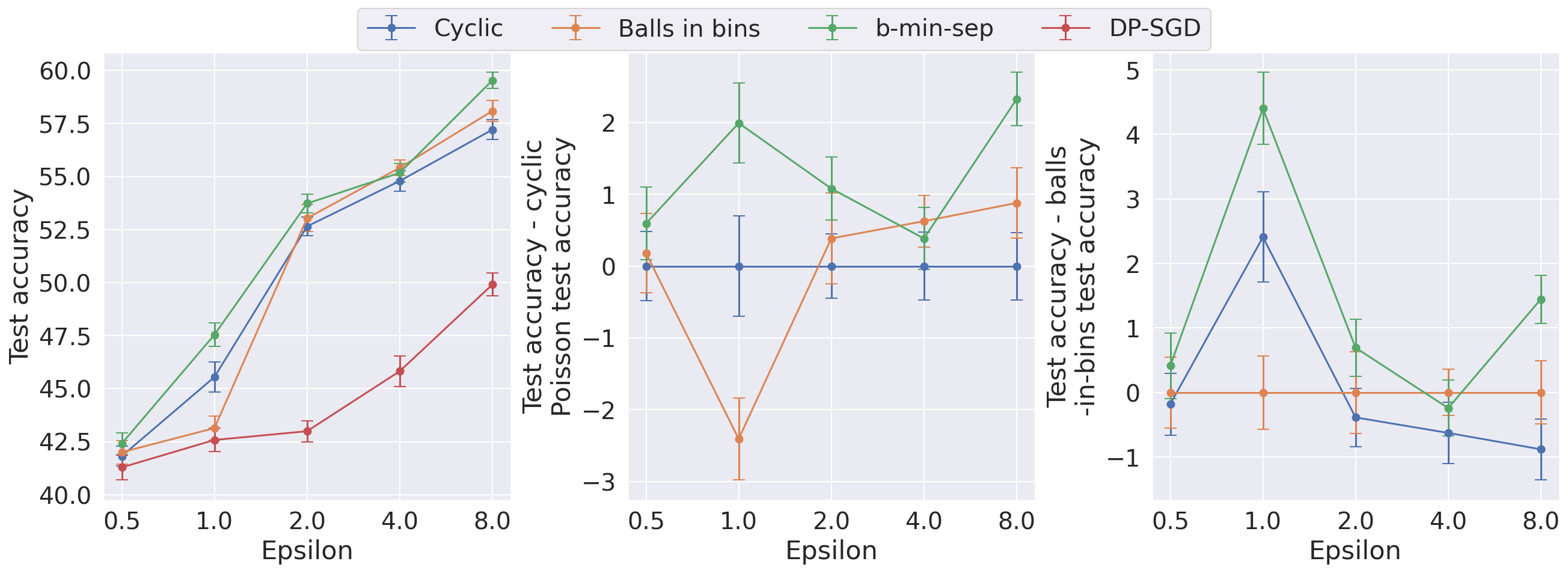}
    \caption{Left: Average test accuracy of a VGG trained on CINIC10 with 30 trials and 95\% confidence intervals for different $\varepsilon$ values and sampling schemes. Middle/right: The difference between test accuracy of each sampling scheme to cyclic Poisson/balls-in-bins.}
    \label{fig:cinic_acc}
\end{figure}

\begin{figure}[t]
    \centering
    \includegraphics[width=0.7\linewidth]{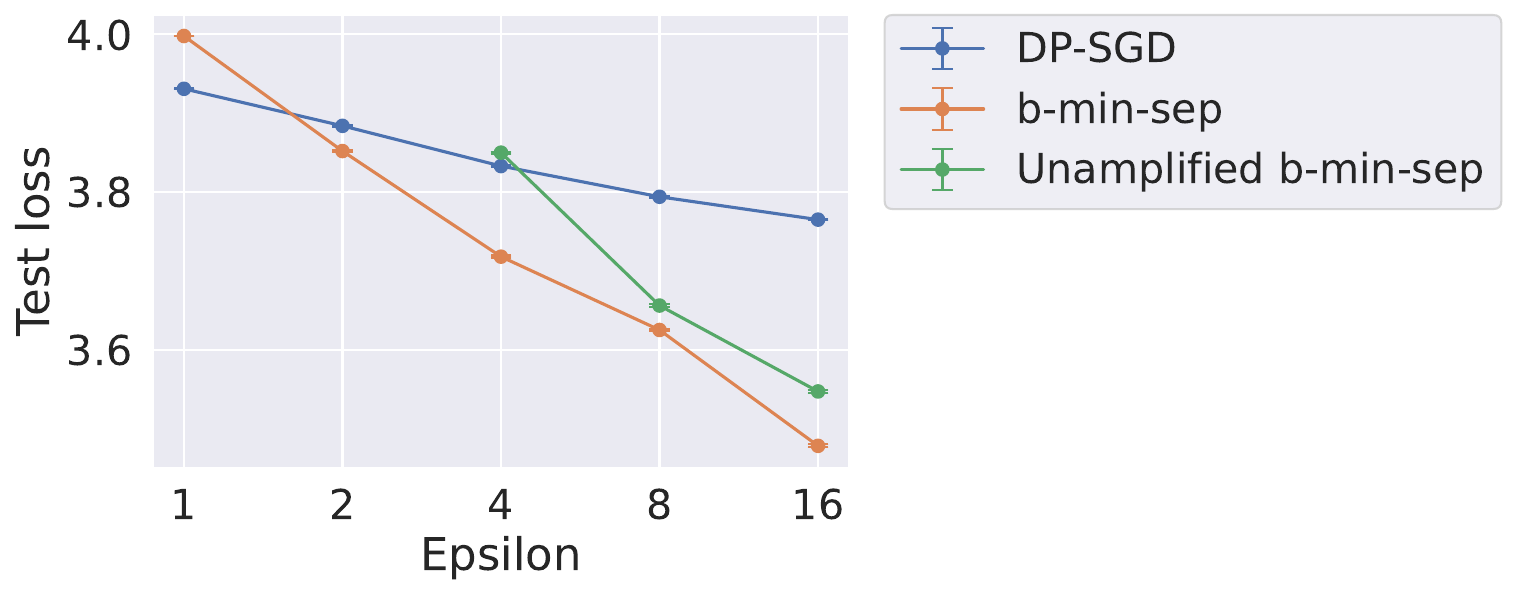}
    \caption{Average test loss on arXiv fine-tuning task with 30 trials and 95\% confidence intervals for different $\varepsilon$ values and sampling schemes.}
    \label{fig:arxiv_loss}
\end{figure}

\mypar{Experimental setup} CINIC10 \cite{darlow18cinic}, is an image classification task formed by extending CIFAR10 using images from ImageNet. We train a VGG  model \cite{simonyan2014very} on the CINIC10 dataset for 1500 iterations with expected batch size $B$ such that $|D|/B$ = 300.  We set the expected batch size to be $|D|/300$, and train for 1500 iterations (i.e., each example appears 5 times in expectation). Each training run is on a standard VGG architecture, and was done using only a single CPU for about 6 hours. We use $\delta = \frac{1}{2|D|}$. Our underlying optimizer is SGD with momentum of $0.95$ and a linear learning rate decay from the base learning rate $\eta$ to $\eta / 20$ over iterations $n/4$ to $3n/4$. We sweep over base learning rate in $\{0.01, 0.03, 0.1, 0.3, 1.0\}$.

We compare DP-SGD, and $(\leq 64)$-BandMF using cyclic Poisson subsampling, balls-in-bins subsampling, and $b$-min-sep subsampling. For each sampling scheme for BandMF, we swept the number of bands in $\{2, 4, 8, 16, 32, 64\}$. While we could have swept larger numbers of bands, we again remark that more bands requires more computation, hence restricting to 64 bands is a reasonable representation of a typical compute-constrained setting in practice. For each number of bands, we use the choice of $\bfC$ that minimizes the MSE without assuming amplification, under a fixed number of participations and a (\# bands)-min-sep constraint (see \cite{choquette2024amplified} for more details on this procedure). For $b$-min-sep, we additionally sweep the min-sep parameter $b$ in $\{$(\# bands)$, 150, 300\}$.   We sweep $\epsilon \in \{0.5, 1, 2, 4, 8\}$. 

\mypar{Results} In \cref{fig:cinic_acc} we plot test accuracy with 95\% confidence intervals. For an easier comparison between different sampling schemes for BandMF, we also replot the test accuracies, after subtracting the test accuracy of cyclic Poisson/balls-in-bins, dropping DP-SGD for ease of readability.

\mypar{Discussion} In almost all settings $b$-min-sep is the best method, except for $\epsilon = 4$ where it is (statistically insignificantly) slightly worse than balls-in-bins, but otherwise it is often noticeably better than the other sampling schemes, improving on cyclic Poisson by as much as $2\%$ absolute accuracy and balls-in-bins by as much as $4\%$ absolute accuracy. Note that our MSE comparisons predicted that $b$-min-sep would be the best method as long as the ratio $|D|/B$ was sufficiently large, and we are training for multiple epochs, and both conditions are met here.

\subsection{Multi-attribution experiments on arXiv}

\mypar{Experimental setup} Following \cite{ganesh2025s}, we fine-tune a TinyBERT model \cite{devlin2018pretraining} on the arXiv dataset \cite{clement2019arxiv}.

The arXiv dataset consists of $\approx$ 1.9 million papers posted on arXiv. For each paper, we treat its abstract as the contents of the example and the authors of the paper as the users the example is attributed to. To make the graph structure more challenging / reflective of the multi-attribution setting, we employ a preprocessing step not employed by \cite{ganesh2025s} where we remove any paper only authored by a single author; i.e. all examples are attributed to multiple users. We shuffle the dataset and take the last $200,000$ examples as a test set. We then run the contribution bounding algorithm of \cite{ganesh2025s} for DP-SGD on the training set, to arrive at a preprocessed version of the dataset where at most $k_u = 2$ examples are attributed to each user, perhaps with duplicates (we choose $2$ and to allow duplicates as this was the best choice of the contribution bounding hyperparameters in \cite{ganesh2025s}). The size of the resulting training set is approximately $300,000$; we fix $\delta = 1 / 500,000$. We fix $1000$ iterations of training with an expected batch size of $1024$, using Adam with ``scale-then-privatize'' as discussed in \cite{ganesh2025design}, and sweep the learning rate over the range $[3 \cdot 10^{-7}, 3 \cdot 10^{-2}]$. Each training run was done on a single V100 GPU for about 6 hours.

To achieve a ``warm-start'' we form $2000$ batches using $b$-min-sep subsampling where all examples are available for sampling in the first iteration, and only use the last $1000$ batches in training. It suffices to compute the privacy guarantees as if we had formed $1000$ batches without a warm-start. Since it is challenging to exactly compute the expected batch size after the warm-start as a function of the sampling probability for $b$-min-sep, we instead use binary search over the sampling probability to find one that empirically achieves approximately the expected batch size. We found that the empirical average batch size achieved by $b$-min-sep subsampling was between $1023$ and $1025$ for all the sampling probabilities we arrived at, hence any error in the calculation of the sampling probability due to empirical variance should only negligibly affect our results. We sweep the number of bands $b$ in $\{2, 4, 8, 16, 32\}$. We sweep $\varepsilon \in \{1, 2, 4, 8, 16\}$. We compare DP-SGD, warm-start DP-BandMF with $b$-min-sep subsampling, and unamplified DP-BandMF.

For unamplified DP-BandMF, we run the batch formation algorithm of \cite{ganesh2025s} which forms a series of batches satisfying $b$-min-sep and $k$-max-part, i.e. no user participates more than $k$ times across all batches, sufficient conditions to account for unamplified BandMF as a Gaussian mechanism. For unamplified BandMF, we use $128$-BandMF and choose $b = 128, k = 7$, as $128$-min-sep is the largest power of $2$ such that the contribution bounding algorithm succeeded, and simultaneously $7$ was the smallest value of $k$ across all $b$ such that the algorithm succeeded. As the utility decreases with $k$ and increases with $b$, this results in the highest-utility variant of unamplified BandMF.

\mypar{Results} In \cref{fig:arxiv_loss} we plot test loss with 95\% confidence intervals. We do not report results for unamplified BandMF at $\varepsilon \leq 2$ because training diverged even with a learning rate as small as $3 \cdot 10^{-7}$ for Adam, whereas reasonable learning rates in $[3 \cdot 10^{-4}, 10^{-2}]$ were stable for non-divergent runs.

\mypar{Discussion} While DP-SGD with Poisson subsampling performs best at $\varepsilon=1$, for $\varepsilon \geq 2$ $b$-min-sep subsampling with BandMF outperforms both DP-SGD and unamplified BandMF. Note that $1$-min-sep subsampling with $1$-BandMF is equivalent to DP-SGD with Poisson subsampling, so the weaker performance of $b$-min-sep at smaller $\varepsilon$ arises only because we restrict to $b>1$. Thus, $b$-min-sep is always as good as, and often better than, both DP-SGD and unamplified DP-MF.

\subsection{Production Use-Case}

We next use $b$-min-sep sampling to improve the privacy-utility curve of an existing production application of DP-BandMF for private training, to demonstrate both its gains in practical regimes and also that Monte Carlo accounting is feasible in practice. Due to the confidential nature of the application, we do not discuss details of this application besides training hyperparameters here\footnote{Even discussing downstream metrics such as training loss may impact the privacy of the training data or the confidentality of the use-case.}. The accounting details are model/task-agnostic and remain verifiable/reproducible without these details.

The existing implementation\footnote{In actuality these numbers vary depending on the results of the data generation process, but these numbers are representative of a typical use case.} uses 256-BandMF (where $\bfC$ optimizes the MSE under cyclic Poisson sampling) combined with a truncated and padded variant of cyclic Poisson sampling (necessary for gradient compilations which require fixed batch sizes) defined in \cite{chua2024scalable} to fine-tune a model for 7200 iterations, using an expected batch size of 1793 and truncated batch size of 2048, with dataset size of 14,745,600, and is trained using $(10, \approx 1.301 \cdot 10^{-8})$-DP. This requires a noise multiplier of $\sigma \approx 0.627$. We first define a truncated and padded variant of $b$-min-sep sampling and give its privacy analysis in \cref{sec:truncated}, as well as deriving efficient Monte Carlo accouting primitives for this variant. Then, we ran Monte Carlo accounting for $256$-min-sep sampling with $7.5 \cdot 10^{9}$ samples per verification for $\sigma = 0.56, 0.5, 0.47, 0.44$, keeping all other training parameters the same and using  \cref{fig:evr_nm} in \cref{section:evr} to turn this into a formal end-to-end training guarantee. The samples were generated in 300 CPU-days per noise multiplier (parallelized across 150 CPUs) using our open-sourced libraries. While this is a significant amount of computation, the internal costs of this computation are cheap, especially compared to other parts of this production pipeline such as data preparation and training itself which required running on costly accelerators.

Our accounting validates that a noise multiplier of $0.47$ for $b$-min-sep achieves the same end-to-end $(\epsilon, \delta$)-DP guarantee, a $25\%$ reduction in the noise multiplier compared to cyclic Poisson. Since all other training parameters are the same as the original training setup and a difference in batch formation while maintaining the same expected batch size does not meaningfully affect utility compared to the effect of the DP noise, the $25\%$ reduction in noise maps to a significant improvement in utility of an application of DP-MF at no cost in privacy budget. Furthermore, compared to the theoretical (not-tight) lower bound of $\sigma \approx 0.368$ achieved by DP-SGD with vanilla Poisson sampling, $b$-min-sep closes $60\%$ of the gap to this lower bound. 

\section*{Acknowledgements}

We are thankful to

\begin{itemize}
    \item Brendan McMahan for the initial suggestion that $b$-min-sep subsampling combined with BandMF might be amenable to a privacy analysis.
    \item Nikita Kalinin and Jan Schuchardt for discussions on whether our techniques could be extended to other classes of matrices and whether matrices which do not optimize the MSE might be better for learning.
    \item Ayfer Özgür for helpful comments on an earlier draft of the paper.
    \item Ryan McKenna for careful reviews of the open-sourced version of our code.
    \item Anonymous reviewers for their suggestions on improving the paper.
\end{itemize}

\bibliographystyle{alpha}
\bibliography{ref}

@misc{pillutla2025correlatednoisemechanismsdifferentially,
      title={Correlated Noise Mechanisms for Differentially Private Learning}, 
      author={Krishna Pillutla and Jalaj Upadhyay and Christopher A. Choquette-Choo and Krishnamurthy Dvijotham and Arun Ganesh and Monika Henzinger and Jonathan Katz and Ryan McKenna and H. Brendan McMahan and Keith Rush and Thomas Steinke and Abhradeep Thakurta},
      year={2025},
      eprint={2506.08201},
      archivePrefix={arXiv},
      primaryClass={cs.LG},
      url={https://arxiv.org/abs/2506.08201}, 
}

@article{darlow18cinic,
  author       = {Luke Nicholas Darlow and
                  Elliot J. Crowley and
                  Antreas Antoniou and
                  Amos J. Storkey},
  title        = {{CINIC-10} is not ImageNet or {CIFAR-10}},
  journal      = {CoRR},
  volume       = {abs/1810.03505},
  year         = {2018},
  url          = {http://arxiv.org/abs/1810.03505},
  eprinttype   = {arXiv},
  eprint       = {1810.03505},
  bibsource    = {dblp computer science bibliography, https://dblp.org}
}

@misc{clement2019arxiv,
    title={On the Use of ArXiv as a Dataset},
    author={Colin B. Clement and Matthew Bierbaum and Kevin P. O'Keeffe and Alexander A. Alemi},
    year={2019},
    eprint={1905.00075},
    archivePrefix={arXiv},
    primaryClass={cs.IR}
}

@inproceedings{henzinger2023almost,
  title={Almost tight error bounds on differentially private continual counting},
  author={Henzinger, Monika and Upadhyay, Jalaj and Upadhyay, Sarvagya},
  booktitle={Proceedings of the 2023 Annual ACM-SIAM Symposium on Discrete Algorithms (SODA)},
  pages={5003--5039},
  year={2023},
  organization={SIAM},
  url={https://arxiv.org/abs/2211.05006},
}

@inproceedings{henzinger2024unifying,
  title={A unifying framework for differentially private sums under continual observation},
  author={Henzinger, Monika and Upadhyay, Jalaj and Upadhyay, Sarvagya},
  booktitle={Proceedings of the 2024 Annual ACM-SIAM Symposium on Discrete Algorithms (SODA)},
  pages={995--1018},
  year={2024},
  organization={SIAM}
}

@article{denisov2022improved,
  title={Improved differential privacy for sgd via optimal private linear operators on adaptive streams},
  author={Denisov, Sergey and McMahan, H Brendan and Rush, John and Smith, Adam and Guha Thakurta, Abhradeep},
  journal={Advances in Neural Information Processing Systems},
  volume={35},
  pages={5910--5924},
  year={2022}
}

@misc{gu2025correlatingcrossiterationnoisedpsgd,
      title={Correlating Cross-Iteration Noise for DP-SGD using Model Curvature}, 
      author={Xin Gu and Yingtai Xiao and Guanlin He and Jiamu Bai and Daniel Kifer and Kiwan Maeng},
      year={2025},
      eprint={2510.05416},
      archivePrefix={arXiv},
      primaryClass={cs.LG},
      url={https://arxiv.org/abs/2510.05416}, 
}

@inproceedings{wang2023randomized,
author = {Wang, Jiachen T. and Mahloujifar, Saeed and Wu, Tong and Jia, Ruoxi and Mittal, Prateek},
title = {A randomized approach to tight privacy accounting},
year = {2023},
publisher = {Curran Associates Inc.},
address = {Red Hook, NY, USA},
booktitle = {Proceedings of the 37th International Conference on Neural Information Processing Systems},
articleno = {1469},
numpages = {38},
location = {New Orleans, LA, USA},
series = {NIPS '23}
}

@article{doroshenko2022connect,
  author       = {Vadym Doroshenko and
                  Badih Ghazi and
                  Pritish Kamath and
                  Ravi Kumar and
                  Pasin Manurangsi},
  title        = {Connect the Dots: Tighter Discrete Approximations of Privacy Loss
                  Distributions},
  journal      = {Proc. Priv. Enhancing Technol.},
  volume       = {2022},
  number       = {4},
  pages        = {552--570},
  year         = {2022},
  url          = {https://doi.org/10.56553/popets-2022-0122},
  doi          = {10.56553/POPETS-2022-0122},
  bibsource    = {dblp computer science bibliography, https://dblp.org}
}

@INPROCEEDINGS{kasiviswanathan2008what,
  author={Kasiviswanathan, Shiva Prasad and Lee, Homin K. and Nissim, Kobbi and Raskhodnikova, Sofya and Smith, Adam},
  booktitle={2008 49th Annual IEEE Symposium on Foundations of Computer Science}, 
  title={What Can We Learn Privately?}, 
  year={2008},
  volume={},
  number={},
  pages={531-540},
  doi={10.1109/FOCS.2008.27}}

@InProceedings{chua25ballsandbins,
  title = 	 {Balls-and-Bins Sampling for DP-SGD},
  author =       {Chua, Lynn and Ghazi, Badih and Harrison, Charlie and Kamath, Pritish and Kumar, Ravi and Leeman, Ethan Jacob and Manurangsi, Pasin and Sinha, Amer and Zhang, Chiyuan},
  booktitle = 	 {Proceedings of The 28th International Conference on Artificial Intelligence and Statistics},
  pages = 	 {946--954},
  year = 	 {2025},
  editor = 	 {Li, Yingzhen and Mandt, Stephan and Agrawal, Shipra and Khan, Emtiyaz},
  volume = 	 {258},
  series = 	 {Proceedings of Machine Learning Research},
  month = 	 {03--05 May},
  publisher =    {PMLR},
  url = 	 {https://proceedings.mlr.press/v258/chua25a.html}
}

@inproceedings{balle2018improving,
  title={Improving the Gaussian Mechanism for Differential Privacy: Analytical Calibration and Optimal Denoising},
  author={Balle, Borja and Wang, Yu-Xiang},
  booktitle={International Conference on Machine Learning},
  pages={394--403},
  year={2018}
}

@article{simonyan2014very,
  title={Very deep convolutional networks for large-scale image recognition},
  author={Simonyan, Karen and Zisserman, Andrew},
  journal={arXiv preprint arXiv:1409.1556},
  year={2014}
}

@inproceedings{song2013stochastic,
  title={Stochastic gradient descent with differentially private updates},
  author={Song, Shuang and Chaudhuri, Kamalika and Sarwate, Anand D},
  booktitle={2013 IEEE Global Conference on Signal and Information Processing},
  pages={245--248},
  year={2013},
  organization={IEEE}
}

@article{choquette2024amplified,
  title={(Amplified) Banded Matrix Factorization: A unified approach to private training},
  author={Choquette-Choo, Christopher A and Ganesh, Arun and McKenna, Ryan and McMahan, H Brendan and Rush, John and Guha Thakurta, Abhradeep and Xu, Zheng},
  journal={Advances in Neural Information Processing Systems},
  volume={36},
  year={2024}
}

@INPROCEEDINGS {dvijotham2024efficient,
author = { Dvijotham, Krishnamurthy Dj and McMahan, H. Brendan and Pillutla, Krishna and Steinke, Thomas and Thakurta, Abhradeep },
booktitle = { 2024 IEEE 65th Annual Symposium on Foundations of Computer Science (FOCS) },
title = {{ Efficient and Near-Optimal Noise Generation for Streaming Differential Privacy }},
year = {2024},
volume = {},
ISSN = {},
pages = {2306-2317},
doi = {10.1109/FOCS61266.2024.00135},
url = {https://doi.ieeecomputersociety.org/10.1109/FOCS61266.2024.00135},
publisher = {IEEE Computer Society},
address = {Los Alamitos, CA, USA},
month =Oct}

@inproceedings{kalinin2024banded,
  title={Banded Square Root Matrix Factorization for Differentially Private Model Training},
  author={Nikita P. Kalinin and Christoph H. Lampert},
  booktitle={Proceedings of the 38th International Conference on Neural Information Processing Systems (NeurIPS 2024)},
  volume={37},
  year={2024},
  url={arxiv.org}
}

@inproceedings{
mckenna2025scaling,
title={Scaling up the Banded Matrix Factorization Mechanism for Large Scale Differentially Private {ML}},
author={Ryan McKenna},
booktitle={The Thirteenth International Conference on Learning Representations},
year={2025},
url={https://openreview.net/forum?id=69Fp4dcmJN}
}

@misc{kalinin2025squarerootsoptimalbound,
      title={Back to Square Roots: An Optimal Bound on the Matrix Factorization Error for Multi-Epoch Differentially Private SGD}, 
      author={Nikita P. Kalinin and Ryan McKenna and Jalaj Upadhyay and Christoph H. Lampert},
      year={2025},
      eprint={2505.12128},
      archivePrefix={arXiv},
      primaryClass={cs.CR},
      url={https://arxiv.org/abs/2505.12128}, 
}

@INPROCEEDINGS{charles25userleveldp,
  author={Charles, Zachary and Ganesh, Arun and McKenna, Ryan and McMahan, H. Brendan and Mitchell, Nicole and Pillutla, Krishna and Rush, Keith},
  booktitle={2025 IEEE Conference on Secure and Trustworthy Machine Learning (SaTML)}, 
  title={Learning with User-Level Differential Privacy Under Fixed Compute Budgets}, 
  year={2025},
  volume={},
  number={},
  pages={901-920},
  doi={10.1109/SaTML64287.2025.00055}}

@inbook{henzinger25improved,
author = {Monika Henzinger and Jalaj Upadhyay},
title = {Improved Differentially Private Continual Observation Using Group Algebra},
booktitle = {Proceedings of the 2025 Annual ACM-SIAM Symposium on Discrete Algorithms (SODA)},
chapter = {},
pages = {2951-2970},
doi = {10.1137/1.9781611978322.95},
year={2025},
URL = {https://epubs.siam.org/doi/abs/10.1137/1.9781611978322.95},
eprint = {https://epubs.siam.org/doi/pdf/10.1137/1.9781611978322.95}
}

@inproceedings{DMNS,
    author  = {Dwork, Cynthia and McSherry, Frank and Nissim, Kobbi and Smith, Adam},
    title   = {Calibrating Noise to Sensitivity in Private Data Analysis},
    booktitle = {Proc.  of the Third Conf.  on Theory of Cryptography (TCC)},
    year    = {2006},
    pages   = {265--284},
    url     = {http://dx.doi.org/10.1007/11681878\_14},
}

@inproceedings{
chua2024scalable,
title={Scalable {DP}-{SGD}: Shuffling vs. Poisson Subsampling},
author={Lynn Chua and Badih Ghazi and Pritish Kamath and Ravi Kumar and Pasin Manurangsi and Amer Sinha and Chiyuan Zhang},
booktitle={The Thirty-eighth Annual Conference on Neural Information Processing Systems},
year={2024},
url={https://openreview.net/forum?id=6gMnj9oc6d}
}

@article{choquette2023correlated,
  title={Correlated noise provably beats independent noise for differentially private learning},
  author={Choquette-Choo, Christopher A and Dvijotham, Krishnamurthy and Pillutla, Krishna and Ganesh, Arun and Steinke, Thomas and Thakurta, Abhradeep},
  journal={arXiv preprint arXiv:2310.06771},
  year={2023}
}

@article{choquette2022multi,
  title={Multi-epoch matrix factorization mechanisms for private machine learning},
  author={Choquette-Choo, Christopher A and McMahan, H Brendan and Rush, Keith and Thakurta, Abhradeep},
  journal={arXiv preprint arXiv:2211.06530},
  year={2022}
}

@inproceedings{BST14,
 author = {Bassily, Raef and Smith, Adam and Thakurta, Abhradeep},
 title = {Private Empirical Risk Minimization: Efficient Algorithms and Tight Error Bounds},
 booktitle = {Proc.  of the 2014 IEEE 55th Annual Symp.  on Foundations of Computer Science (FOCS)},
 year = {2014},
 pages = {464--473},
}

@inproceedings{DP-DL,
  author    = {Mart{\'{\i}}n Abadi and
               Andy Chu and
               Ian J. Goodfellow and
               H. Brendan McMahan and
               Ilya Mironov and
               Kunal Talwar and
               Li Zhang},
  title     = {Deep Learning with Differential Privacy},
  booktitle = {Proc.  of the 2016 {ACM} {SIGSAC} Conf.  on Computer and Communications Security ({CCS}'16)},
  pages     = {308--318},
  year      = {2016},
}

@misc{ganesh2025design,
      title={On Design Principles for Private Adaptive Optimizers}, 
      author={Arun Ganesh and Brendan McMahan and Abhradeep Thakurta},
      year={2025},
      eprint={2507.01129},
      archivePrefix={arXiv},
      primaryClass={cs.LG},
      url={https://arxiv.org/abs/2507.01129}, 
}

@misc{ganesh2025tighter,
      title={Tighter Privacy Analysis for Truncated Poisson Sampling}, 
      author={Arun Ganesh},
      year={2025},
      eprint={2508.15089},
      archivePrefix={arXiv},
      primaryClass={cs.CR},
      url={https://arxiv.org/abs/2508.15089}, 
}

@misc{devlin2018pretraining,
  author = {Devlin, Jacob and Chang, Ming-Wei and Lee, Kenton and Toutanova, Kristina},
  description = {[1810.04805] BERT: Pre-training of Deep Bidirectional Transformers for Language Understanding},
  note = {cite arxiv:1810.04805Comment: 13 pages},
  title = {BERT: Pre-training of Deep Bidirectional Transformers for Language
  Understanding},
  url = {http://arxiv.org/abs/1810.04805},
  year = 2018
}

@article{ponomareva23dpfy,
   title={How to DP-fy ML: A Practical Guide to Machine Learning with Differential Privacy},
   volume={77},
   ISSN={1076-9757},
   url={http://dx.doi.org/10.1613/jair.1.14649},
   DOI={10.1613/jair.1.14649},
   journal={Journal of Artificial Intelligence Research},
   publisher={AI Access Foundation},
   author={Ponomareva, Natalia and Hazimeh, Hussein and Kurakin, Alex and Xu, Zheng and Denison, Carson and McMahan, H. Brendan and Vassilvitskii, Sergei and Chien, Steve and Thakurta, Abhradeep Guha},
   year={2023},
   month=jul, pages={1113–1201} }

@article{choquette2023privacy,
  title={Privacy amplification for matrix mechanisms},
  author={Choquette-Choo, Christopher A and Ganesh, Arun and Steinke, Thomas and Thakurta, Abhradeep},
  journal={arXiv preprint arXiv:2310.15526},
  year={2023}
}

@article{choquette2024near,
  title={Near exact privacy amplification for matrix mechanisms},
  author={Choquette-Choo, Christopher A and Ganesh, Arun and Haque, Saminul and Steinke, Thomas and Thakurta, Abhradeep},
  journal={arXiv preprint arXiv:2410.06266},
  year={2024}
}

@article{demko1984decay,
  title={Decay rates for inverses of band matrices},
  author={Demko, Stephen and Moss, William F and Smith, Philip W},
  journal={Mathematics of computation},
  volume={43},
  number={168},
  pages={491--499},
  year={1984}
}

@article{ganesh2025s,
  title={It's My Data Too: Private ML for Datasets with Multi-User Training Examples},
  author={Ganesh, Arun and McKenna, Ryan and McMahan, Brendan and Smith, Adam and Wu, Fan},
  journal={arXiv preprint arXiv:2503.03622},
  year={2025}
}

@article{dong2025leveraging,
  title={Leveraging randomness in model and data partitioning for privacy amplification},
  author={Dong, Andy and Chen, Wei-Ning and Ozgur, Ayfer},
  journal={Proceedings of the 42nd International Conference on Machine Learning},
  volume={267},
  pages={13938--13962},
  year={2025}
}

@article{chen2023privacy,
  title={Privacy amplification via compression: Achieving the optimal privacy-accuracy-communication trade-off in distributed mean estimation},
  author={Chen, Wei-Ning and Song, Dan and Ozgur, Ayfer and Kairouz, Peter},
  journal={Advances in Neural Information Processing Systems},
  volume={36},
  pages={69202--69227},
  year={2023}
}

\appendix

\crefalias{section}{appendix}

\section{Estimate-Verify-Release Without Empty Outputs}\label{section:evr}

For convenience we propose \cref{fig:evr_nm}, a variant of the estimate-verify-release framework for choosing the noise multiplier. Unlike the framework given in \cite{wang2023randomized}, under mild assumptions (which we later discuss how to satisfy for the mechanisms of interest in this paper) our variant has the property that it does not ever return an empty output. Furthermore, it allows one to propose many candidate mechanisms and (roughly speaking) run the lowest-noise one that succeeds verification, whereas the framework in \cite{wang2023randomized} requires one to commit to a single mechanism ahead of time. 

\begin{figure}[H]
\begin{algorithm}[H]
\caption{Estimate-Verify-Release with Monte Carlo accounting for noise selection}
\label{fig:evr_nm}
\textbf{Parameters:} Candidate mechanisms $\calM_1, \calM_2, \calM_3, \ldots, \calM_k$ such that $\calM_i$ satisfies any $(\varepsilon, \delta)$-DP guarantee satisfied by $\calM_{i-1}$ for $i \leq k-1$, verifier $\calV$ with outputs $0, 1$.
\begin{algorithmic}[1]
\State{$v_k \leftarrow 1$ }
\For{$i \in [k-1]$}
\State $v_i \leftarrow \calV(\calM_{i})$
\EndFor
\State $i^* = \min\{i: \forall j \geq i, v_j = 1\}$
\State Run $\calM_{i^*}$ and return its output.
\end{algorithmic}
\end{algorithm}
\end{figure}

\begin{theorem}\label{thm:evr}
In addition to the ordering on the mechanisms, assume the following hold:
\begin{itemize}
    \item For any $i$ such that $\calM_i$ does not satisfy $(\varepsilon, \delta)$-DP, $\calV$ independently ``rejects'' $\calM_i$ w.p. at least $1 - q$, i.e. we have $\Pr[v_i = 1] < q$ for all such $i$ and the $v_i$ use independent randomness\footnote{i.e. we use independent samples in each instance of the Monte Carlo estimation procedure}.
    \item $\calM_k$ satisfies $(\varepsilon, \delta)$-DP.
\end{itemize}
Then \cref{fig:evr_nm} satisfies $(\varepsilon, \delta + q(1-\delta))$-DP.
\end{theorem}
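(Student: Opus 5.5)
Fix adjacent datasets $D \simeq D'$ and write $\calA$ for \cref{fig:evr_nm}. The plan is to bound $\calA(D)(S) \leq e^\varepsilon \calA(D')(S) + \delta + q(1-\delta)$ for every measurable $S$, by conditioning on the index $i^*$ and splitting its possible values into a ``good'' and a ``bad'' part. The verifier outputs $v_1,\ldots,v_{k-1}$ depend only on the candidate mechanisms and the verifier's own randomness, not on the dataset. Hence $i^*$ has the same distribution under $D$ and $D'$, and it is independent of the randomness used to run $\calM_{i^*}$.

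Let $i_0$ be the largest index such that $\calM_{i_0}$ fails $(\varepsilon,\delta)$-DP, with $i_0 = 0$ if no candidate fails. By the ordering assumption, if $\calM_i$ satisfies $(\varepsilon,\delta)$-DP then so does every later mechanism. So the failing indices form a prefix $\{1,\ldots,i_0\}$, and $i_0 < k$ because $\calM_k$ is DP. We then show that the bad event $i^* \le i_0$ forces $v_{i_0} = 1$. The definition of $i^*$ requires $v_j = 1$ for all $j \ge i^*$, and $j = i_0$ is one of these. The first hypothesis therefore gives $\Pr[i^* \le i_0] \le \Pr[v_{i_0}=1] < q$. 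On the complementary event $i^* > i_0$, the mechanism that actually runs is $(\varepsilon,\delta)$-DP.

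Now decompose. Write $\pi_i = \Pr[i^*=i]$, let $G = \{i > i_0\}$ be the good indices, and let $\beta = \sum_{i \le i_0}\pi_i < q$. Then
\begin{align*}
\calA(D)(S) &= \sum_{i\in G}\pi_i \calM_i(D)(S) + \sum_{i\le i_0}\pi_i\calM_i(D)(S)\\
&\le \sum_{i\in G}\pi_i\bigl(e^\varepsilon \calM_i(D')(S)+\delta\bigr) + \sum_{i\le i_0}\pi_i \calM_i(D)(S).
\end{align*}
The first sum is at most $e^\varepsilon\calA(D')(S) + (1-\beta)\delta$. Bounding the second sum by $\beta$ alone gives total slack $(1-\beta)\delta + \beta = \delta + \beta(1-\delta) \le \delta + q(1-\delta)$. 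The last inequality holds because this slack is increasing in $\beta$. This is exactly the claimed bound.

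The main obstacle is justifying the bad-index term cleanly. Two points need care. First, when bounding $\Pr[i^*\le i_0]$ we must use the single rejection probability of $v_{i_0}$, not a union bound over all failing indices. This is the step where the monotone ordering is essential. Second, the slack must be combined so the $\delta$ lost on the bad event is not double-counted. This is what produces $\delta + q(1-\delta)$ rather than $\delta + q$. Independence of the $v_i$ is not really needed beyond the fact that the $v_i$ do not depend on the data, so the argument should go through as sketched.
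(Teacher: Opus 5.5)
Your proof is correct and is essentially the paper's argument. Your $i_0$ is the paper's $i_{\min}-1$, the bad event $i^*\le i_0$ is controlled by the single verifier output $v_{i_0}$, and the slack is combined as $\delta\Pr[i^*>i_0]+\Pr[i^*\le i_0]=\delta+(1-\delta)\Pr[i^*\le i_0]$, exactly as in the paper.

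One small correction to your commentary: you define $i_0$ as the \emph{largest} failing index, so every index above it is $(\varepsilon,\delta)$-DP by definition. Your argument therefore never actually uses the monotone ordering, nor the independence of the $v_i$, so calling the ordering ``essential'' overstates its role.
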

\begin{proof}
The proof is almost identical to the proof of Theorem 9 in \cite{wang2023randomized}. Let $o \in \calO$ be the output of $\calM_{i^*}$ in \cref{fig:evr_nm}. By the post-processing property, we can assume the output of \cref{fig:evr_nm} is $(i^*, o)$. Let $i_{\min}$ be the smallest $i$ such that $\calM_i$ (and hence $\calM_j$ for all $j \geq i$) satisfies $(\varepsilon, \delta)$-DP. If $i_{\min} = 1$ then $i^* \geq i_{\min}$ w.p. 1. Otherwise, by the assumption on $\calV$, with probability at least $1-q$ we have that $v_{i_{\min} - 1} = 0$ and thus $i^* \geq i_{\min}$. Then letting $\calM$ denote \cref{fig:evr_nm}, we have for any $S \subseteq [k] \times \calO$ and any two adjacent datasets $D$:

\begin{align*}
&\phantom{{}={}} \Pr_{(i^*, o) \sim \calM(D)}[(i^*, o) \in S]\\
&= \Pr_{(i^*, o) \sim \calM(D)}[(i^*, o) \in S \land i^* \geq i_{\min}] + \Pr_{(i^*, o) \sim \calM(D)}[(i^*, o) \in S \land i^* < i_{\min}] \\
&\leq \Pr_{(i^*, o) \sim \calM(D)}[(i^*, o) \in S \land i^* \geq i_{\min}] + \Pr_{(i^*, o) \sim \calM(D)}[i^* < i_{\min}] \\
&= \sum_{i = i_{\min}}^k \Pr_{o \sim \calM_i(D)}[(i, o) \in S] \cdot \Pr_{(i^*, o) \sim \calM(D)}[i^* = i] + \Pr_{(i^*, o) \sim \calM(D)}[i^* < i_{\min}] \\
&\leq \sum_{i = i_{\min}}^k \left(e^\varepsilon \Pr_{o \sim \calM_i(D')}[(i, o) \in S] + \delta\right) \cdot \Pr_{(i^*, o) \sim \calM(D')}[i^* = i] + \Pr_{(i^*, o) \sim \calM(D')}[i^* < i_{\min}] \\
&= e^\varepsilon \Pr_{(i^*, o) \sim \calM(D')}[(i^*, o) \in S \land i^* \geq i_{\min}]  + \delta \cdot \Pr_{(i^*, o) \sim \calM(D')}[i^* \geq i_{\min}] + \Pr_{(i^*, o) \sim \calM(D')}[i^* < i_{\min}] \\
&\leq e^\varepsilon \Pr_{(i^*, o) \sim \calM(D')}[(i^*, o) \in S \land i^* \geq i_{\min}]  + \delta \cdot \Pr_{(i^*, o) \sim \calM(D')}[i^* \geq i_{\min}] + \Pr_{(i^*, o) \sim \calM(D')}[i^* < i_{\min}] \\
&\leq e^\varepsilon \Pr_{(i^*, o) \sim \calM(D')}[(i^*, o) \in S]  + \delta \cdot \Pr_{(i^*, o) \sim \calM(D')}[i^* \geq i_{\min}] + \Pr_{(i^*, o) \sim \calM(D')}[i^* < i_{\min}] \\
&= e^\varepsilon \Pr_{(i^*, o) \sim \calM(D')}[(i^*, o) \in S]  + \delta + \Pr_{(i^*, o) \sim \calM(D')}[i^* < i_{\min}] \cdot (1-\delta)\\
&\leq e^\varepsilon \Pr_{(i^*, o) \sim \calM(D')}[(i^*, o) \in S]  + \delta + q \cdot (1-\delta).\\
\end{align*}
\end{proof}

To e.g. apply \cref{thm:evr} to noise multiplier selection, we choose a sequence of increasing noise multipliers $\sigma_1, \ldots, \sigma_{k-1}$ and for $i \leq k-1$ the mechanism $\calM_i$ is defined as DP-MF using $\sigma_i$. Increasing the noise multiplier can only improve privacy, hence this choice of $\{\calM_i\}$ satisfies the ordering property assumed by \cref{fig:evr_nm}. For our experiments, we choose our sequence to be a set of powers of $1.01$. We choose the minimum noise multiplier $\sigma_1$ to be smaller than the noise multiplier required for DP-SGD with Poisson sampling (i.e, DP-MF with $\bfC = \bfI$). As DP-MF benefits less from privacy amplification and hence requires more noise than DP-SGD, this is a conservative cutoff, and indeed we never end up choosing $\sigma_1$ in our experiments. To ensure that $\calM_k$ satisfies the target DP guarantee, we can simply have $\calM_k$ be a mechanism that is easy to do privacy accounting for, e.g. DP-SGD with Poisson subsampling or DP-BandMF with cyclic Poisson subsampling.

For the verifier, we use Monte Carlo accounting, which we recall is given a pair of distributions $P, Q$ whose divergence we want to calculate, and draws $s$ samples $y_1, y_2, \ldots, y_s$ from $P$ and computes $\tilde{\delta} = \frac{1}{s} \sum_{i=1}^s \max\left\{1-e^\varepsilon \frac{Q(y_i)}{P(y_i)}\right\}$. For any $\delta' < \delta$ and the given number of samples $s$, Fact 4.2 of \cite{chua25ballsandbins} gives an easily calculable bound $q(\delta, \delta', s)$ on the probability that $\tilde{\delta} \leq \delta'$ if the true hockey-stick divergence between $P, Q$ is larger than $\delta$, hence we can use the verifier outputting 1 if $\tilde{\delta} \leq \delta'$ and 0 otherwise in \cref{fig:evr_nm}, and the resulting bound $q$ in \cref{thm:evr}. Our overall privacy parameter can be written as the function $\delta_{o}(\delta, \delta', s) := \delta + q(\delta, \delta', s)(1-\delta)$.

For our experiments, given target final delta $\delta_t$, for simplicity we fix $\delta' = \delta_t/2$. We use SciPy's scalar minimization library to find $\delta_{o, \min}(\delta', s) := \min_\delta \delta_{o}(\delta, \delta', s)$ for our choice of $\delta'$ and a given $s$, and then use binary search to find the minimum $s$ such that $\delta_{o, \min}(\delta', s) \leq \delta_t$. Our final $(\varepsilon, \delta_t)$-DP mechanism is then to use \cref{fig:evr_nm} with Monte Carlo accounting as the verifier, $\delta' = \delta_t/2$, and this choice of $s$. One could optimize the sample complexity further by optimizing  $\delta_o$ jointly over $\delta', \delta$ instead of just over $\delta$, although this might result in using smaller $\delta'$ and hence a larger ``cost'' for Monte Carlo accounting. 

We have open-sourced code for \cref{fig:evr_nm} as well as computing one of $\delta_t, \delta', \delta_s$ given the other two as part of \texttt{jax\_privacy}\footnote{\url{https://github.com/google-deepmind/jax_privacy/blob/main/jax_privacy/experimental/monte_carlo/delta_calculation.py}}.

\section{Extension to Non-Banded Matrices}

The dynamic programming recursion in \cref{section:montecarlo} relied on a key structural property: conditioned on the availability event $\mathcal{A}_i$, the past outputs $(\bfy_1,\ldots,\bfy_{i-1})$ and the future outputs $(\bfy_i,\ldots,\bfy_t)$ are conditionally independent under both $P$ and $Q$. This follows from locality in the representation $\bfy=\bfC\bfx+\bfz$ when $\bfC$ is $b$-banded: the coordinate $\bfy_i$ only depends on $\bfx_{i-b+1},\ldots,\bfx_i$. Consequently, the likelihood ratio $P/Q$ factorizes over time and yields an exact recursion.

In this appendix we consider the complementary regime where $\bfC^{-1}$ (rather than $\bfC$) is $b$-banded. We assume $\bfC$ is lower triangular and entry-wise nonnegative, but we no longer assume $\bfC$ is banded. In this setting, the conditional independence used by the exact recursion breaks down: even under $\mathcal{A}_i$, the coordinate $\bfy_i$ may depend (through $\bfC$) on all earlier participations, so no finite window fully isolates the influence of early iterations---each participation has a decaying but persistent effect on all future coordinates.

To handle this, we give an analytical ``middle ground'' between (i) a loose upper bound that drops all cross terms in the dense setting, and (ii) the exact recursion available when $\bfC$ is banded. The idea is to enforce a configurable separation length $r$ between participations of the same example (an $r$-min-sep constraint), where $r$ is a hyperparameter. As an example where this could be effective, when $\bfC^{-1}$ is banded, columns of $\bfC$ decay exponentially away from the diagonal, so the influence of a participation decays exponentially into the future. By preventing re-participation for $r-1$ iterations, we eliminate the region where this influence is strongest, while treating the residual long-range influence via a controlled approximation (dropping nonnegative cross terms). This $r$-min-sep formulation interpolates between the exact factorization of the $b$-banded $\bfC$ case (recovered when $r=b$) and a coarse constant-dropping bound in the fully dense case (recovered when $r=1$). We emphasize that the method applies to any lower-triangular, entry-wise nonnegative $\bfC$, including the $b$-banded case.

\subsection{$r$-min-sep recursion for arbitrary (non-banded) $\bfC$}\label{app:rminsep}

We use the same setup as in \cref{section:montecarlo}, except that we enforce $r$-min-sep subsampling (instead of $b$-min-sep) and do not assume $\bfC$ is banded. Let $\mathcal{A}_i$ denote the event that the example of interest is available at iteration $i$ under the $r$-min-sep rule, i.e.\ it has not participated in the previous $r-1$ iterations. Define
\[
f_i(\bfy) \;:=\; \frac{P_i(\bfy \mid \mathcal{A}_i)}{Q_i(\bfy)} ,
\]
where $P_i,Q_i$ are the suffix marginals on iterations $i,\ldots,t$ as in \cref{section:montecarlo}. We upper bound $f_i$ via the recursion
\begin{equation}\label{eq:rminsep_recursion}
f_i(\bfy)
= (1-p)\cdot f_{i+1}(\bfy)
+ p \cdot \frac{\mathcal{N}(\bfc_i,\sigma^2 I)}{\mathcal{N}(\mathbf{0},\sigma^2 I)}(\bfy)\cdot f_{i+r}(\bfy),
\end{equation}
with boundary condition $f_i(\bfy)=1$ for all $i>t$, where $c_i$ denotes the $i$th column of $\bfC$ (viewed as a length-$t$ vector, with entries above the diagonal equal to $0$).

\begin{lemma}\label{lem:rminsep_upperbound}
For all $i\in[t]$ and all $\bfy$, the function defined by \eqref{eq:rminsep_recursion} satisfies
\[
f_i(\bfy) \;\ge\; \frac{P_i(\bfy \mid \mathcal{A}_i)}{Q_i(\bfy)}.
\]
In particular, $f_1(\bfy)$ is an upper bound on the likelihood ratio $P(\bfy)/Q(\bfy)$ and can be used in Monte Carlo accounting.
\end{lemma}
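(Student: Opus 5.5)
The plan is to write the true likelihood ratio as an explicit mixture over participation patterns, bound each mixture component by a product of single-participation likelihood ratios, and then check that the resulting expectation of products satisfies \eqref{eq:rminsep_recursion} \emph{exactly}. The inequality comes entirely from the second step; the third is a first-step decomposition of the $r$-min-sep Markov chain, as in the proof of \cref{thm:dp-correctness}.

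\textbf{Setting and interpretation.} Fix $i$. Conditioned on $\mathcal{A}_i$, the suffix process is driven by a random participation set $S \subseteq \{i,\ldots,t\}$ from the $r$-min-sep chain started in the available state at iteration $i$. At each available iteration the example participates with probability $p$. If it participates at $j$, it is blocked until $j+r$; otherwise it is available again at $j+1$. I interpret $P_i(\cdot \mid \mathcal{A}_i)$ as the law of $\sum_{j\in S} \bfc_j + \bfz$ restricted to coordinates $i,\ldots,t$. Since $\bfC$ is lower triangular, $\bfc_j$ vanishes on rows $<j$, so each $\bfc_j$ with $j\ge i$ lives on the suffix coordinates. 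This is the one place I expect to need care: with non-banded $\bfC$, earlier participations also shift the suffix coordinates. I would handle this by stating the lemma for the restarted suffix process, which is the quantity the recursion computes. At $i=1$ there is no past, so $f_1$ bounds the true $P(\bfy)/Q(\bfy)$, which is the claim that matters for Monte Carlo accounting.

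\textbf{Step 1: explicit mixture.} Writing $\bfx_S=\sum_{j\in S}\mathbf{e}_j$,
\[
\frac{P_i(\bfy\mid\mathcal{A}_i)}{Q_i(\bfy)}=\mathbb{E}_S\left[\exp\left(\frac{\langle \bfC\bfx_S,\bfy\rangle}{\sigma^2}-\frac{\|\bfC\bfx_S\|_2^2}{2\sigma^2}\right)\right].
\]

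\textbf{Step 2: drop cross terms.} Expand
\[
\|\bfC\bfx_S\|_2^2=\sum_{j\in S}\|\bfc_j\|_2^2+2\sum_{j<k\in S}\langle\bfc_j,\bfc_k\rangle .
\]
Because $\bfC$ is entrywise nonnegative, every cross term is $\ge 0$, so dropping them can only increase the exponent. This bounds each component by $\prod_{j\in S}\frac{\mathcal{N}(\bfc_j,\sigma^2 I)}{\mathcal{N}(\mathbf{0},\sigma^2 I)}(\bfy)$. Hence the true ratio is at most
\[
g_i(\bfy):=\mathbb{E}_S\left[\prod_{j\in S}\frac{\mathcal{N}(\bfc_j,\sigma^2 I)}{\mathcal{N}(\mathbf{0},\sigma^2 I)}(\bfy)\right],
\]
with $g_i=1$ for $i>t$ (empty product).

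\textbf{Step 3: $g_i$ obeys the recursion.} Condition on whether $i\in S$. With probability $1-p$ the example does not participate, the chain is available at $i+1$, and the remaining set is distributed as the set for start $i+1$; this contributes $(1-p)\,g_{i+1}$. With probability $p$ it participates, contributing the factor $\frac{\mathcal{N}(\bfc_i,\sigma^2 I)}{\mathcal{N}(\mathbf{0},\sigma^2 I)}(\bfy)$. It is then blocked for $r-1$ iterations, and the rest of $S$ is distributed as the set for start $i+r$, independently of this choice by the Markov property. This contributes $p\cdot\frac{\mathcal{N}(\bfc_i,\sigma^2 I)}{\mathcal{N}(\mathbf{0},\sigma^2 I)}(\bfy)\,g_{i+r}$. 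The product factorizes across the two pieces because, after Step 2, the bound is a plain product over $S$ with no interaction terms.

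So $g_i$ satisfies \eqref{eq:rminsep_recursion} with the same boundary condition, and backward induction on $i$ gives $g_i=f_i$. Combined with Step 2, this gives $f_i(\bfy)\ge P_i(\bfy\mid\mathcal{A}_i)/Q_i(\bfy)$, and in particular $f_1(\bfy)\ge P(\bfy)/Q(\bfy)$.

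Since the Monte Carlo estimate $\max\{1-e^\varepsilon Q/P,0\}$ is increasing in $P/Q$, replacing $P/Q$ by the upper bound $f_1$ only overestimates $\delta$. The estimate therefore remains a valid, conservative input to the verifier in \cref{fig:evr_nm}.
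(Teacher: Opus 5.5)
Your proof is correct and uses the same two ingredients as the paper's: dropping the nonnegative cross terms $\bfc_j^\top\bfc_k$ (by entrywise nonnegativity of $\bfC$), and a first-step decomposition of the $r$-min-sep chain. Your explicit reading of $P_i(\cdot\mid\mathcal{A}_i)$ as the restarted suffix process on $\bfC_{i:t}$ is exactly what the paper uses implicitly through $\mu_r(\cdot\mid\mathcal{A}_2)$ and $\bfC_{2:t}$. The only difference is the order: the paper peels off the first iteration, drops only the cross terms $\bfc_1^\top\bfc_j$, and inducts on suffix length, whereas you drop all cross terms at once and then show that $\mathbb{E}_S\bigl[\prod_{j\in S}\frac{\mathcal{N}(\bfc_j,\sigma^2 I)}{\mathcal{N}(\mathbf{0},\sigma^2 I)}(\bfy)\bigr]$ satisfies \eqref{eq:rminsep_recursion} with equality, which has the small bonus of giving $f_i$ in closed form and pinpointing the slack exactly.
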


We first quickly justify why an upper bound on $P/Q$ can be used in Monte Carlo accounting. Recall that Monte Carlo accounting estimates $\delta$ as the average of $\mathbb{E}_{y \sim P} [\max\{1 - e^\varepsilon \frac{Q(y)}{P(y)}, 0\}]$. The function $\max\{1 - e^\varepsilon \frac{Q(y)}{P(y)}, 0\}$ is increasing in $P(y)/Q(y)$, i.e. using an upper bound on $P(y)/Q(y)$ can only result in overestimating $\delta$, i.e. a conservative estimate of the privacy parameters.

\begin{proof}

Let $\mu_r(\bfx)=\mu_r(\bfx\mid \mathcal{A}_1)$ denote the distribution of the participation vector $\bfx\in\{0,1\}^t$ under $r$-min-sep subsampling. Let $\bfC_{i:t}$ be the submatrix formed by rows $i$ through $t$ and columns $i$ through $t$ (technically, shorthand for $\bfC_{i:t,\, i:t}$). Using the standard Gaussian likelihood-ratio identity, we can write
\begin{align*}
\frac{P}{Q}(\bfy)
&= \sum_{\bfx\in\{0,1\}^t} \mu_r(\bfx)\,
\exp\!\left(\frac{1}{2\sigma^2}\Big(2\bfy^\top \bfC\bfx - \|\bfC\bfx\|_2^2\Big)\right) \quad\quad (\star)\\
&= \sum_{\bfx_{2:t}\in\{0,1\}^{t-1}} \mu_r\!\left(\begin{bmatrix}0\\ \bfx_{2:t}\end{bmatrix}\right)
\exp\!\left(\frac{1}{2\sigma^2}\Big(2\bfy^\top \bfC \begin{bmatrix}0\\ \bfx_{2:t}\end{bmatrix}
- \Big\|\bfC\begin{bmatrix}0\\ \bfx_{2:t}\end{bmatrix}\Big\|_2^2\Big)\right) \\
&\phantom{{}={}}+
\sum_{\bfx_{2:t}\in\{0,1\}^{t-1}} \mu_r\!\left(\begin{bmatrix}1\\ \bfx_{2:t}\end{bmatrix}\right)
\exp\!\left(\frac{1}{2\sigma^2}\Big(2\bfy^\top \bfC \begin{bmatrix}1\\ \bfx_{2:t}\end{bmatrix}
- \Big\|\bfC\begin{bmatrix}1\\ \bfx_{2:t}\end{bmatrix}\Big\|_2^2\Big)\right) \\
&= \underbrace{\mu_r(\bfx_1=0)}_{=\,1-p}\,
\sum_{\bfx_{2:t}} \mu_r(\bfx_{2:t}\mid \bfx_1=0)
\exp\!\left(\frac{1}{2\sigma^2}\Big(2\bfy_{2:t}^\top \bfC_{2:t}\bfx_{2:t}
- \|\bfC_{2:t}\bfx_{2:t}\|_2^2\Big)\right) \\
&\phantom{{}={}}+
\underbrace{\mu_r(\bfx_1=1)}_{=\,p}\,
\sum_{\bfx_{(1+r):t}} \mu_r(\bfx_{(1+r):t}\mid \bfx_1=1)
\exp\!\left(\frac{1}{2\sigma^2}\Big(2\bfy^\top \bfC \begin{bmatrix}e_1\\ \bfx_{(1+r):t}\end{bmatrix}
- \Big\|\bfC\begin{bmatrix}e_1\\ \bfx_{(1+r):t}\end{bmatrix}\Big\|_2^2\Big)\right),
\end{align*}
where $e_1=(1,0,\ldots,0)^\top\in\mathbb{R}^r$. In the first term we used that $\bfC$ is lower triangular: once we zero out the first coordinate of $\bfx$, the first output coordinate depends only on that zeroed coordinate and hence separates, leaving the suffix $(2{:}t)$. In the second term, $r$-min-sep implies $\bfx_2=\cdots=\bfx_r=0$ whenever $\bfx_1=1$, so we can skip directly to index $1+r$.

We upper bound the exponent in the second term by dropping nonnegative cross terms, using that $\bfC$ is entry-wise nonnegative.We revisit the problem of privacy amplification for BandMF, i.e. DP-SGD with random batches and noise correlated across iterations using a banded correlation matrix. We propose and analyze a new sampling scheme for BandMF called $b$-min-sep subsampling. We show that $b$-min-sep subsampling is a generalization and unification of Poisson subsampling and balls-in-bins subsampling, and that it theoretically improves on the privacy-utility tradeoff of cyclic Poisson sampling, thus subsuming all past practical sampling schemes for DP-SGD and BandMF in the literature. We show that privacy analysis for BandMF with $b$-min-sep subsampling can be performed using Monte Carlo accounting, along with our main technical tool, a dynamic program for computing a likelihood ratio between an exponentially large mixture of Gaussians and the zero-centered Gaussian. Our experiments show that (when using a moderate sample complexity for Monte Carlo accounting), $b$-min-sep subsampling gives empirical improvements over past approaches for medium-to-large $\varepsilon$ values, and that if the error in Monte Carlo estimation could be removed, $b$-min-sep subsampling would Pareto dominate all past approaches for all tested values of $\varepsilon$, validating our theoretical observations. We additionally demonstrate that unlike other methods for amplification of BandMF, $b$-min-sep subsampling naturally extends to the multi-attribution setting. 
First,
\[
\bfy^\top \bfC \begin{bmatrix}e_1\\ \bfx_{(1+r):t}\end{bmatrix}
= \bfy_1^\top \bfc_1 + \bfy_{(1+r):t}^\top \bfC_{(1+r):t}\bfx_{(1+r):t}.
\]
Second,
\begin{align*}
\left\|\bfC \begin{bmatrix}e_1\\ \bfx_{(1+r):t}\end{bmatrix}\right\|_2^2
&= \|\bfc_1\|_2^2 + \sum_{i=1+r}^t \bfx_i \|c_i\|_2^2
+ \underbrace{2\sum_{j>r} \bfx_j\, \bfc_1^\top c_j}_{\ge 0}
+ 2\sum_{r<i<j} \bfx_i\bfx_j\, c_i^\top c_j \\
&\ge \|\bfc_1\|_2^2 + \sum_{i=1+r}^t \bfx_i \|c_i\|_2^2
+ 2\sum_{r<i<j} \bfx_i\bfx_j\, c_i^\top c_j \\
&= \|\bfc_1\|_2^2 + \|\bfC_{(1+r):t}\bfx_{(1+r):t}\|_2^2.
\end{align*}
Plugging these bounds into $(\star)$ yields
\begin{align*}
&\phantom{{}={}} \frac{P}{Q}(\bfy) \\
&\le (1-p)\sum_{\bfx_{2:t}} \mu_r(\bfx_{2:t}\mid \mathcal{A}_2)\,
\exp\!\left(\frac{1}{2\sigma^2}\Big(2\bfy_{2:t}^\top \bfC_{2:t}\bfx_{2:t}
- \|\bfC_{2:t}\bfx_{2:t}\|_2^2\Big)\right) \\
&\phantom{{}={}}+
p \cdot \exp\!\left(\frac{\bfy_1^\top \bfc_1 + \|\bfc_1\|_2^2}{2\sigma^2}\right)
\sum_{\bfx_{(1+r):t}} \mu_r(\bfx_{(1+r):t}\mid \mathcal{A}_{1+r})\,
\exp\!\left(\frac{2\bfy_{(1+r):t}^\top \bfC_{(1+r):t}\bfx_{(1+r):t}
- \|\bfC_{(1+r):t}\bfx_{(1+r):t}\|_2^2}{2\sigma^2} \right).
\end{align*}
The two sums are the same form as $(\star)$ but on shorter suffixes (equivalently, after shifting indices). Therefore, applying the same argument inductively on suffix length yields exactly the recursion \eqref{eq:rminsep_recursion}, proving that the resulting $f_i(\bfy)$ upper bounds the true likelihood ratio on the suffix:
\[
f_i(\bfy)\;\ge\; \frac{P_i(\bfy\mid \mathcal{A}_i)}{Q_i(\bfy)}.\qedhere
\]
\end{proof}

\subsection{Tightness when $\bfC^{-1}$ is banded Toeplitz}\label{app:rminsep_tightness}

We specialize to the common BandMF setting where $\bfC^{-1}$ is a $b$-banded Toeplitz noise-cancellation matrix. Writing the $b$ band values as $h_1,\ldots,h_b$, define the polynomial
\[
H(\lambda)=h_1+h_2\lambda+\cdots+h_b\lambda^{b-1}.
\]
The noise-cancellation condition assumes $H$ has all zeros strictly inside the unit disk. Equivalently, the convolution operator $z\mapsto y$ defined by $y=h*z$ (where $*$ denotes discrete convolution) is invertible on $\ell^2$, and its inverse is bounded: $\|z\|_2\le M\|y\|_2$ for some $M<\infty$, implying $\sigma_{\min}(\bfC^{-1})\ge 1/M$. On the other hand, $\sigma_{\max}(\bfC^{-1})$ is bounded since both the bandwidth and the band values are bounded. Thus $\bfC^{-1}$ is well-conditioned. By the Demko--Moss--Smith theorem \cite{demko1984decay}, there exist constants $K>0$ and $\rho\in(0,1)$ depending only on $h_1,\ldots,h_b$ such that
\[
|\bfC_{i,j}|\le K\rho^{|i-j|}\qquad \forall i,j.
\]
In particular, entries in each column of $\bfC$ (below the diagonal) decay exponentially at rate $\rho$.

The slack in \cref{lem:rminsep_upperbound} arises from dropping terms of the form $\bfc_i^\top \bfc_j$ when $\bfx_i=\bfx_j=1$, which under $r$-min-sep can only happen when $|i-j|\ge r$. Because $\bfC$ is lower triangular Toeplitz (the class of lower triangular Toeplitz matrices is closed under inversion), columns $\bfc_i$ and $\bfc_j$ are shifted versions of each other. For $|i-j|\ge r$, we therefore have
\[
\bfc_i^\top \bfc_j
\;\le\; K^2\left(\rho^{r}+\rho^{r+2}+\rho^{r+4}+\cdots\right)
\;=\; \frac{K^2\rho^r}{1-\rho^2},
\]
which decays exponentially in $r$. Plugging this into the exponent shows that the approximation error vanishes exponentially fast in $r$: equivalently, the recursion yields a multiplicative slack of $1+O(e^{-cr})$ (or an additive slack $O(e^{-cr})$) for some $c>0$ depending only on $\rho$.

\subsection{Practical Considerations and Empirical Observations}

We have implemented and experimented with the $r$-min-sep subsampling strategy described above in settings where $\bfC^{-1}$ is banded, including the Toeplitz noise-cancellation matrices commonly used in BandMF mechanisms. While the analysis in \cref{app:rminsep,app:rminsep_tightness} shows that bandedness of $\bfC^{-1}$ leads to exponentially decaying long-range interactions in $\bfC$, our empirical findings suggest that this structure does not translate into a noticeable improvement in utility compared to the simpler case where $\bfC$ itself is banded.

In particular, for comparable privacy parameters and sampling rates, we did not observe a clear utility advantage from using banded $\bfC^{-1}$ over the banded-$\bfC$ constructions analyzed in the main body, but find it preferable to work directly with banded $\bfC$, where the resulting analysis is both cleaner and tighter. For this reason, we focus on banded-$\bfC$ mechanisms in the main body and include the banded-$\bfC^{-1}$ case here primarily for completeness.

That said, the framework developed here is general and does not rely on any specific structure beyond lower triangularity and entry-wise nonnegativity of $\bfC$. Should future work identify more effective ways to exploit the structure of $\bfC^{-1}$—for instance, via tighter control of cross terms or alternative conditioning arguments—the $r$-min-sep analysis presented here would apply directly.

We hope that including this discussion may nonetheless be useful to readers interested in alternative matrix factorizations or noise-cancellation mechanisms, and that it may help spark further ideas within the community.

\section{An Additional Parameter in the Contribution Bounding Algorithm}\label{sec:additionalparam}

In the pre-processing step of the dataset, we prune examples such that each user has at most $k_u$ examples. This follows the lead of \cite{ganesh2025s} and is a crucial part to the privacy analysis of Algorithm~\ref{fig:bms_multiattr} too. However, we do not set any limitations on the size of an example's user-sharing neighborhood, $|N(e)|$. Intuitively, a large $|N(e)|$ means that including this example $e$ may bar many other examples from participating, leading to a smaller batch size and decreased utility. However, setting a limit $k_e$ on $|N(e)|$ means pruning more examples in the pre-processing step, which also means a smaller batch size, so we need to determine the tradeoff and the parameter $k_e$.

To quantify the effect of $k_e$ on the batch size, let us consider a toy example, where after pre-processing with $k_u$, we have $n$ users, $n$ examples each belonging to a different user, and 1 example $e$ shared by every user. We want to know whether including / removing $e$ increases or decreases the expected batch size. In the case we exclude $e$, we just have $n$ copies of the individualized $b$-min-sep setting, and from prior analysis we know that on average, each example participates once in $(b-1) + 1/p$ iterations. If we include $e$, whenever $e$ participates, it means every example can participate in this iteration (since $e$ is user-sharing with every example). If an example $e'$ also happens to participate in this iteration (with probability $p$), the participation of $e$ has no impact on $e'$ since $e'$ will ban itself for $b-1$ iterations anyways. However, if an example $e'$ does not participate (due to not getting sampled, with probability $1-p$), the participation of $e$ bans it for $b-1$ additional iterations, which means we have an average decrease of $\frac{b-1}{(b-1) + 1/p}$ in the batch size. With $n$ other examples, the average decrease in the batch size for every time $e$ is sampled is
\[(1-p) \cdot \frac{b-1}{b-1 + \frac{1}{p}} \cdot n,\]
and the increase in batch size is 1 (from $e$ itself). In a practical training setting where $p \approx 0$ and $1/p \gg b$, we would roughly have that
\[bpk_e < 1\]
means we should include an example. We caution that this is a very rough analysis to determine $k_e$, and we have neglected the graph structure and the fact that getting $e$ sampled (instead of included in the iteration) also bans other examples.

Nevertheless, under our threat model we treat the hypergraph structure as public information, and seek only to protect the contents of the examples. This corresponds to the fixed-graph (multi-attribution user-level) privacy model proposed and studied by \cite{ganesh2025s}, in which preprocessing steps that depend solely on the attribution graph incur no privacy cost. Under this model, it is therefore permissible in practice to experiment with different preprocessing strategies and subsampling parameters—such as varying the edge-level contribution bound $k_e$—prior to any privacy-sensitive training. In particular, one may use our rule-of-thumb analysis to identify a reasonable range of $k_e$ values and then empirically select a setting that yields favorable utility before running any privacy-leaking computation.

At the same time, the interaction between user-level and edge-level contribution constraints suggests that an explicit $k_e$ constraint is often unnecessary once a sufficiently small user-level bound $k_u$ is imposed. Intuitively, bounding the total number of examples attributed to any single user already restricts the feasible overlap patterns among examples, which in turn limits the effective edge-level contributions. This phenomenon is also observed empirically. For example, on the arXiv dataset considered by \cite{ganesh2025s}, the median number of examples per user is $k_u = 2$. When using a subsampling probability chosen to achieve an expected batch size of approximately $128$, we find that imposing additional $k_e$ constraints does not increase the expected batch size beyond what is already achieved by the $k_u$ constraint alone. In such regimes, enforcing an explicit edge-level bound provides little additional benefit and may be safely omitted without sacrificing utility. Nevertheless, in settings with larger user degrees, higher subsampling probabilities, or more skewed attribution graphs, an explicit edge-level constraint can still play an important role, and we include the $k_e$-based preprocessing framework to accommodate such regimes when needed.

\section{Truncated $b$-Min-Sep Sampling}\label{sec:truncated}

Modern ML training pipelines are most efficient when they operate over batches of fixed sizes. For this reason, \cite{chua2024scalable} proposed truncated Poisson sampling, a variant of Poisson sampling that (i) for batches with size $> B_{\max}$, truncates them to a maximum batch size $B_{\max}$ (here, we assume the truncation is done by picking a uniformly random subset of size $B_{\max}$), and (ii) for batches with size $< B_{\max}$, pads them with zero-gradient examples until the size is $B_{\max}$. To extend $b$-min-sep sampling analogously, we simply form a sequence of batches using $b$-min-sep sampling, and then pad or truncate them accordingly afterwards. Note that this means an example which was sampled and then removed from the batch due to truncation is still ineligible to participate in the next $b$ iterations.

Padding does not affect the gradient sum because it is adding zeros, but truncation does affect the gradient sum and can have an adverse impact on privacy. \cite{chua2024scalable} showed how to bound the impact of truncation on DP-SGD, and \cite{ganesh2025tighter} gave a tightening of their result. \cite{ganesh2025tighter} effectively assumes the adversary can receive an indicator for each iteration indicating whether truncation occurred in that round or not. They effectively break the privacy analysis into the usual Poisson sampling analysis when this indicator is false, and a worst-case privacy analysis conditioned on the indicator being true.

To accommodate truncation in our result, we do something similar, but instead assume an even stronger adversary that sees the number of examples (except for the sensitive example) sampled by Poisson sampling in round $i$ prior to padding or truncation, which we denote $r_i$. This is because in the setting of \cite{ganesh2025tighter}, the size of the pool of examples available in each round (except for the sensitive example) is fixed at $|D| - 1$, so the distribution of $r_i \sim \mathrm{Binom}(|D|-1, p)$ is fixed and an upper-bounding privacy analysis can be defined as a function of this distribution. In our setting the size of this pool varies, and in turn the distribution of $r_i$ changes from round-to-round. In particular, the privacy guarantee of the distribution of $r_i \sim \mathrm{Binom}(k, p)$ is not monotonic in $k$, and it is challenging to determine which value of $k$ is the worst-case for the privacy analysis. By releasing $r_i$ to the adversary and calculating the worst-case privacy loss conditioned on this fixed $r_i$, we remove the need to determine the value of $k$ inducing the worst-case distribution over $r_i$. 

\subsection{Privacy Analysis}

We first state a straightforward generalization of the ``vector-to-scalar'' reduction of Lemma 4.5 in ~\cite{choquette2023privacy}. 

\begin{lemma}\label{lem:2mixturereduction}
    Let $\bfa_1, \bfa_2, \ldots \bfa_k, \bfb_1, \bfb_2, \ldots \bfb_\ell \in \mathbb{R}^{n}$, and $\bfa_1', \bfa_2', \ldots \bfa_k', \bfb_1', \bfb_2', \ldots \bfb_\ell' \in \mathbb{R}^{n\times p}$ be such that for any $i, j$ $\ltwo{\bfa'_i[j, :]} \leq \bfa_i[j]$ (resp. $\ltwo{\bfb'_i[j, :]} \leq \bfb_i[j]$). Then for any $p_1, p_2, \ldots p_k, q_1, q_2, \ldots, q_l \in [0, 1]$ such that $\sum_i p_i = \sum_i q_i = 1$ and $\alpha, \sigma \geq 0$ we have:

    \[H_\alpha\left(\sum_{i \in [k]} p_i \calN(\bfa'_i, \sigma^2 \mathbb{I}), \sum_{j \in [\ell]} q_j \calN(\bfb'_j, \sigma^2 \mathbb{I})\right) \leq H_\alpha\left(\sum_{i \in [k]} p_i \calN(\bfa_i, \sigma^2 \mathbb{I}), \sum_{j \in [\ell]} q_j \calN(-\bfb_j, \sigma^2 \mathbb{I})\right)\]

    and 

    \[H_\alpha\left(\sum_{j \in [\ell]} q_j \calN(\bfb'_j, \sigma^2 \mathbb{I}), \sum_{i \in [k]} p_i \calN(\bfa'_i, \sigma^2 \mathbb{I})\right) \leq H_\alpha\left(\sum_{j \in [\ell]} q_j \calN(-\bfb_j, \sigma^2 \mathbb{I}), \sum_{i \in [k]} p_i \calN(\bfa_i, \sigma^2 \mathbb{I})\right)\]

Furthermore, this holds even if the rows of $\bfa'_i, \bfb'_j$ are chosen adaptively while the $\bfa_i, \bfb_j$ are fixed, i.e. an adversary trying to classify a sample from one of $\sum_{i \in [k]} p_i \calN(\bfa'_i, \sigma^2 \mathbb{I}), \sum_{j \in [\ell]} q_j \calN(\bfb'_j, \sigma^2 \mathbb{I})$ gets to decide the $j$th row of all $\bfa_i', \bfb'_j$ after seeing the first $j-1$ rows of the sample, as long as the bounds on the row norms holds of each $\bfa_i', \bfb'_j$. 
\end{lemma}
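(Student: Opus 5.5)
We reduce the matrix-valued mixtures to scalar-row mixtures with a rotation-and-projection argument, as in the proof of Lemma 4.5 of \cite{choquette2023privacy}, but now with a mixture on both sides and with the sign flip on the $\bfb$'s. We handle the non-adaptive case first and then obtain the adaptive case by processing rows one at a time.

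\textbf{Step 1 (row-by-row reduction).} We process the output rows $j = 1, \ldots, n$ sequentially, as a hybrid argument. At step $j$ we replace the $j$th rows $\bfa'_i[j,:]$, $\bfb'_i[j,:]$ by scalar (one-dimensional) rows $\bfa_i[j]$, $-\bfb_i[j]$, embedded along a fixed direction. We show that this replacement can only increase $H_\alpha$. To do this we condition on the first $j-1$ rows of the sample. Under both mixtures the posterior over the mixture component (the weights $p_i$ or $q_i$ reweighted by the likelihood of the earlier rows) is some distribution. Given the component, the $j$th row is Gaussian and independent of the other rows.

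\textbf{Step 2 (one-row post-processing).} The key single-row claim concerns one output row $y \in \mathbb{R}^p$ drawn from a mixture, where the means of the first distribution have norms $\le a_i$ and those of the second have norms $\le b_j$. We want to show that such a row is a post-processing of a scalar output with means exactly $a_i$ for the first distribution and $-b_j$ for the second. Taking $p=1$ WLOG is not immediate, because the directions $\bfa'_i[j,:]$ differ across components. To handle this we first use the norm bounds: a Gaussian $\calN(\mu, \sigma^2)$ with $|\mu| \le a$ is a post-processing of $\calN(a,\sigma^2)$ paired against $\calN(-b,\sigma^2)$. This follows from the standard fact that $\{\calN(a,\sigma^2), \calN(-b,\sigma^2)\}$ dominates every pair $\{\calN(\mu,\sigma^2), \calN(\nu,\sigma^2)\}$ with $|\mu|\le a$, $|\nu|\le b$: there is a Markov kernel, depending only on the pair $(\mu,\nu)$, mapping the first pair to the second, since the Gaussian pair is characterized by $|\mu-\nu|/\sigma \le (a+b)/\sigma$. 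The difficulty is that one kernel must serve all components of the mixture simultaneously.

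\textbf{Main obstacle.} Handling mixtures with heterogeneous directions and magnitudes is exactly where a naive kernel argument fails. The approach we would try first is to follow Lemma 4.5 of \cite{choquette2023privacy}. We write the hockey-stick divergence as a supremum over test functions and, in the adaptive game, let the adversary choose directions. We then argue by a monotonicity/convexity statement: for a fixed sample history, the optimal adversary aligns all $\bfa'_i[j,:]$ with one unit vector $v$ at full norm $\bfa_i[j]$, and all $\bfb'_j[j,:]$ with $-v$ at full norm. The justification is that the likelihood ratio of the two mixtures along $v$ is monotone in $\langle y, v\rangle$, and increasing $\bfa$-means or decreasing $\bfb$-means in this direction stochastically separates the two mixtures. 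The orthogonal components of $y$ are then pure noise under both mixtures and can be dropped by post-processing. The second inequality (with $P$ and $Q$ swapped) follows by the symmetric argument.

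Finally, since each step only reads the previous rows, induction over $j$ gives the adaptive statement.
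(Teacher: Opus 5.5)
There is a genuine gap at the step that carries all the content. Your overall plan matches the paper's route: the paper reruns the proof of Lemma 4.5 of \cite{choquette2023privacy}, using that the one-dimensional likelihood ratio of $\sum_i p_i\calN(a_i,\sigma^2)$ to $\sum_m q_m\calN(-b_m,\sigma^2)$ is continuous and increasing in $x$. But your justification that ``the optimal adversary aligns all $\bfa'_i[j,:]$ with $v$ and all $\bfb'_m[j,:]$ with $-v$ at full norm'' only compares configurations that are \emph{already} aligned. Monotonicity of the ratio along $v$, and extra separation from larger magnitudes along $v$, say nothing about why, e.g., $\bfa'_1[j,:]=\bfa_1[j]e_1$, $\bfa'_2[j,:]=\bfa_2[j]e_2$ gives a smaller $H_\alpha$. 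That is exactly the heterogeneous-direction obstacle you identified. The missing idea is to compare each component separately against the common reference $R=\calN(\boldzero,\sigma^2\bfI)$ via Neyman--Pearson. For any $S\subseteq\mathbb{R}^p$, choose $t$ with $R(S)=\Pr_{x\sim\calN(0,\sigma^2)}[x\ge t]$. Then $\calN(\mu,\sigma^2\bfI)(S)\le\Pr_{x\sim\calN(a,\sigma^2)}[x\ge t]$ whenever $\ltwo{\mu}\le a$, since the best set of $R$-mass $R(S)$ is a half-space along $\mu$. Likewise $\calN(\nu,\sigma^2\bfI)(S)\ge\Pr_{x\sim\calN(-b,\sigma^2)}[x\ge t]$ whenever $\ltwo{\nu}\le b$. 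Because the single half-line $[t,\infty)$ serves every component at once, $\sum_i w_i\calN(\mu_i,\sigma^2\bfI)(S)-\sum_m v_m\calN(\nu_m,\sigma^2\bfI)(S)$ is at most the one-dimensional objective on $[t,\infty)$, hence at most the one-dimensional divergence. Taking the supremum over $S$ gives the single-row inequality, and it holds for arbitrary nonnegative, unnormalized weights $w_i,v_m$.

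Your row reduction also needs repair. Conditioning on the first $j-1$ rows does not isolate row $j$: the component index couples row $j$ to rows $j+1,\dots,n$, so $H_\alpha$ is not determined by the conditional law of row $j$ given the past. In the non-adaptive case, fix \emph{all} other rows instead. Write $P'=\sum_i p_i\calN(\bfa'_i,\sigma^2\bfI)$, $Q'=\sum_m q_m\calN(\bfb'_m,\sigma^2\bfI)$ and $H_\alpha(P',Q')=\int(P'(\bfy)-\alpha Q'(\bfy))_+\,d\bfy$. The integrand in $\bfy_j$ is $\bigl(\sum_i w_i\calN(\bfa'_i[j,:],\sigma^2\bfI)(\bfy_j)-\alpha\sum_m v_m\calN(\bfb'_m[j,:],\sigma^2\bfI)(\bfy_j)\bigr)_+$, where $w_i,v_m$ absorb the other rows' densities, so the single-row inequality applies. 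In the adaptive game your forward induction fails. The means of rows $j+1,\dots,n$ are functions of $\bfy_j$, so once you condition on those rows the weights depend on $\bfy_j$ and the single-row comparison is unavailable. Process the rows backward, $j=n,\dots,1$. When row $j$ is replaced, rows $j+1,\dots,n$ are already fixed scalars that do not depend on $\bfy_j$, and row $j$'s means depend only on $\bfy_{1:j-1}$, on which you condition.
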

\begin{proof}
    The proof is effectively the same as Lemma 4.5 of \cite{choquette2023privacy}, except instead of the observation that $\sum_i p_i \exp(\ltwo{\bfc_i}x)$ is continuous, increasing in $x$, and has range $(0, \infty)$ we make the same observation for $\frac{\sum_i p_i \exp(\ltwo{\bfa_i}x}{\sum_j q_j \exp(-\ltwo{\bfb_j}x)}$.
\end{proof}

By combining \cref{lem:2mixturereduction} with the reduction of \cite{ganesh2025tighter} to map to two worst-case distributions, conditioned on a given set of $\{r_i\}$ values, we get the following worst-case pair:

\begin{corollary}
For DP-MF truncated $b$-min-sep sampling, under the zero-out adjacency\footnote{By applying the equivalent reduction from \cite{ganesh2025tighter} we can derive the analogous pair for e.g. the replace-one adjacency.} $P = \calN(\bfC \bfx, \sigma^2\mathbb{I}), Q = \calN(\bfC \bfy, \sigma^2 \mathbb{I})$ is a dominating pair (i.e., it suffices to compute $H_\alpha$ between these two distributions) where $\bfx, \bfy$ are random variables that are functions of the truncated $b$-min-sep sampling process defined as follows:

\[\bfx(i) = \left\{\begin{array}{ll}
     2 & \text{if the sensitive example is sampled in round }i\text{, truncation occurs, and it survives truncation}\\
      1 & \text{if the sensitive example is sampled in round }i\text{, and no truncation occurs}\\
      0 & otherwise
\end{array}\right\} \]
\[\bfy(i) = \left\{\begin{array}{ll}
     -1 & \text{if the sensitive example is sampled in round }i\text{, truncation occurs, and it survives truncation}\\
      0 & otherwise
\end{array}\right\} \]
\end{corollary}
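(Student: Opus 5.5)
We argue conditionally on the released counts $\{r_i\}$ and on the sensitive example's participation pattern, and then apply \cref{lem:2mixturereduction}. Since the adversary is assumed to see $\{r_i\}$, it suffices to bound $H_\alpha$ between the output distributions on $D$ and $D'$ conditioned on any fixed realization of $\{r_i\}$, and then average over $\{r_i\}$.

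\textbf{Step 1: match the sampling randomness of $D$ and $D'$.} Under the zero-out adjacency, the sensitive example is still present in $D'$ (as $\bot$). We run truncated $b$-min-sep sampling on $D$ and $D'$ with the same randomness for which indices are sampled and which are truncated. Then the $b$-min-sep availability process of the sensitive example is identical in the two runs, and so is its pattern of sampling and survival, since truncation is a uniformly random subset of size $B_{\max}$ of a batch whose size is the same in both runs.

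\textbf{Step 2: compute the per-round gradient discrepancy in each case.} Fix round $i$.
\begin{itemize}
\item If the sensitive example is not sampled, the batches coincide and both rows contribute $0$.
\item If it is sampled and no truncation occurs, then on $D$ the batch gains one gradient of norm at most $1$, and on $D'$ it gains a zero gradient. This gives row norms bounded by $1$ and $0$.
\item If it is sampled, truncation occurs, and it survives, then on $D$ it displaces some other example. On $D'$ the $\bot$ example survives in its place, so the displaced example is either present or absent. Relative to a common reference batch (the surviving other examples minus one), $D$ contributes the sensitive gradient and $D'$ contributes another example's gradient, each of norm at most $1$.
\item If it is sampled but truncated away, the batches differ only through which other example fills the slot; on both sides this is absorbed into the reference.
\end{itemize}
Subtracting a shared reference gradient from both sides is post-processing. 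After this subtraction, the truncation-survival case gives a $D$-row of norm at most $2$ (the sensitive gradient minus the removed other gradient) and a $D'$-row of norm at most $1$. We take the reference to be the gradient of the example displaced on the $D$ side. The rows of both mixtures are then chosen adaptively by the gradient oracle, subject to row-norm bounds $\bfx(i)\in\{0,1,2\}$ and $|\bfy(i)|\in\{0,1\}$ respectively.

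\textbf{Step 3: apply the lemma.} After multiplying by $\bfC$, the output is a mixture over sampling patterns of $\calN(\bfC\bfG, \sigma^2\mathbb{I})$. Here $\bfC$ is non-negative and banded, and the $b$-min-sep property makes the participation rows disjoint in their influence. As in the reduction to Lemma 4.5 of \cite{choquette2023privacy}, the per-row norm bounds therefore translate into bounds on the rows of $\bfC\bfG$, given coordinatewise by $\bfC\bfx$ and $\bfC|\bfy|$. Applying \cref{lem:2mixturereduction}, with the mixture weights being the probabilities of the sampling and truncation patterns given $\{r_i\}$, lets us replace the adaptive rows by $\bfC\bfx$ on the $P$ side and $-\bfC|\bfy| = \bfC\bfy$ on the $Q$ side. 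This holds in both directions of the divergence, so $(P,Q)$ is a dominating pair.

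\textbf{Main obstacle.} The delicate point is Step 2 in the truncation case. We must check that a single common reference can be subtracted so that the signs come out as stated, $+2$ on one side and $-1$ on the other, mirroring \cite{ganesh2025tighter}. We must also check that conditioning on $r_i$ keeps the joint law of survival and participation identical across $D$ and $D'$, so that the mixture weights match. I would verify this first by writing out the coupling explicitly for one round, and then extend it round by round.
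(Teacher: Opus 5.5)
Your route is the paper's: condition on $\{r_i\}$, bound each round's rows using the truncation reduction of \cite{ganesh2025tighter}, then apply \cref{lem:2mixturereduction} with nonnegative $\bfC$. The paper simply cites \cite{ganesh2025tighter} for the per-round step. Your own derivation of that step, Step 2, which you correctly flag as the main obstacle, does not go through as written.

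In a round with $r_i\ge B_{\max}$, the component where the sensitive example survives contains only $B_{\max}-1$ other examples. The not-sampled and truncated-away components contain $B_{\max}$. So ``the surviving other examples (minus one)'' is not common to all mixture components, and subtracting it is not post-processing. You also describe the reference and the $D'$ contribution in incompatible ways: ``another example's gradient'' in one place, and a row of norm at most $1$ after subtracting ``the gradient of the displaced example'' in another. As a result, the $2$ and $1$ are asserted rather than derived.

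The Step 1 coupling does not help. $H_\alpha$ between two mixtures depends only on the two marginals, not on any coupling of the $D$ and $D'$ runs. Comparing matched coupled components ($g^*+\sum_{S'}g$ versus $\sum_{S'}g$) would miss the displaced example entirely. For instance, if every other gradient equals $-g^*$, the survive-component is shifted by $2g^*$ under $D$ and by $g^*$ under $D'$ relative to all other components.

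The missing idea is a realization of truncation whose reference is independent of the sensitive element's fate. For $r_i\ge B_{\max}$, draw $S$ uniformly among the $B_{\max}$-subsets of the $r_i$ other sampled examples, and draw $e_0$ uniformly from $S$. Output $S$, unless the sensitive element (the real example under $D$, $\bot$ under $D'$) is sampled and then survives truncation, which happens with probability $B_{\max}/(r_i+1)$ given that it is sampled. In that case output $(S\setminus\{e_0\})\cup\{\text{sensitive}\}$. These events are independent of $(S,e_0)$, and $S\setminus\{e_0\}$ is a uniform $(B_{\max}-1)$-subset, so the batch has the correct law. Moreover, $(S,e_0)$ has the same law under $D$ and $D'$, so revealing it to the adversary only weakens the bound. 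After that, subtracting $\sum_{e\in S}g_e$ is genuine (adaptive) post-processing.

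With this construction the rows are as follows. The surviving row is $g^*-g_{e_0}$ (norm $\le 2$) under $D$ and $-g_{e_0}$ (norm $\le 1$) under $D'$. In rounds with $r_i<B_{\max}$, subtracting the sum over all $r_i$ others leaves $g^*$ versus $0$. All other rows are $0$. With these bounds your Step 3 gives exactly the stated pair.

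Two minor points. First, transferring the row bounds to $\bfC\bfG$ needs only entrywise nonnegativity of $\bfC$ and the triangle inequality, not the $b$-min-sep property. Second, averaging over $\{r_i\}$ yields domination by the $\{r_i\}$-augmented pair, which is the form the paper actually uses for accounting.
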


By the post-processing property, assuming an adversary who can additionally see $\{r_i\}$ (which has the same distribution for two adjacent datasets) in our analysis gives an upper bound on the privacy parameters. We now describe how to do Monte Carlo accounting for this $P, Q$ pair (augmented with the $\{r_i\}$ values). Sampling remains straightforward: we first sample $\{r_i\}$ according to the distribution induced by $b$-min-sep sampling (with or without a warm-start). We also sample the rounds the sensitive example participates in under $b$-min-sep sampling. Finally in rounds where $r_i \geq B_{\max}$, the sensitive example survives truncation w.p. $\frac{B}{r_i + 1}$ (conditioned on $r_i$). Hence we can sample the joint random variables $(\bfx, \{r_i\})$ and $(\bfy,\{r_i\})$ efficiently.

Throughout the rest of this argument we condition on $\{r_i\}$ since we assume it is publicly available to the adversary. For evaluating the privacy loss of a given sample $\bfy$, i.e. computing $\frac{P(\bfy)}{Q(\bfy)}$, a recursion in the style of \eqref{eq:mainrecursion} breaks down because the denominator is no longer a single Gaussian. However, there is an easy fix: let $R$ be the mean-zero Gaussian with variance $\sigma^2$. We can compute $\frac{P(\bfy)}{R(\bfy)}$ and $\frac{Q(\bfy)}{R(\bfy)}$ efficiently using a similar recursion to \eqref{eq:mainrecursion}, and then take the ratio to get $\frac{P(\bfy)}{Q(\bfy)}$. Namely, for computing $\frac{P(\bfy)}{R(\bfy)}$ for a given sample, if $r_i < B_{\max}$ we can apply \eqref{eq:mainrecursion} as before, and if $r_i \geq B_{\max}$ we have:

\[f_i(\bfy) = (1 - p) \cdot f_{i+1}(\bfy)
+ p \cdot \frac{\frac{B_{\max}}{r_i + 1}\mathcal{N}(2c_i, \sigma^2 I) + \frac{r_i + 1 - B_{\max}}{r_i + 1}\mathcal{N}(\mathbf{0}, \sigma^2 I)}{\mathcal{N}(\mathbf{0}, \sigma^2 I)}(\bfy_i, \ldots, \bfy_{i+(b-1)}) \cdot f_{i+b}(\bfy),\]

with the same boundary conditions. For $\ln \frac{Q(\bfy)}{R(\bfy)}$ if $r_i < B_{\max}$ we have:

\[f_i(\bfy) = (1 - p) \cdot f_{i+1}(\bfy)
+ p \cdot f_{i+b}(\bfy)\]

and if $r_i \geq B$ we have:

\[f_i(\bfy) = (1 - p) \cdot f_{i+1}(\bfy)
+ p \cdot \frac{\frac{B_{\max}}{r_i + 1}\mathcal{N}(-c_i, \sigma^2 I) + \frac{r_i + 1 - B_{\max}}{r_i + 1}\mathcal{N}(\mathbf{0}, \sigma^2 I)}{\mathcal{N}(\mathbf{0}, \sigma^2 I)}(\bfy_i, \ldots, \bfy_{i+(b-1)}) \cdot f_{i+b}(\bfy).\]

A proof that these recursions are correct for evaluating $\ln P/R$ and $\ln Q/R$ is analogous to the proof of \cref{thm:dp-correctness}. An important subtletly is that in the warm-start case, one must compute $\frac{f_1(\bfy) + p \sum_{i = 2}^{b} f_i(\bfy)}{1 + (b-1)p}$ separately for $P/R, Q/R$ \textit{before} taking the ratio, rather than taking the ratio and then averaging these values.

\end{document}